\newcommand{\euro}{\EUR}
\spnewtheorem{assumption}{Assumption}[section]{\bf}{\it}
 \journalname{Annals of Finance}
\begin{document}

\title{Diversity and Arbitrage in a Regulatory Breakup Model%
%\thanks{}
}

%\titlerunning{Short form of title}        % if too long for running head

\author{Winslow Strong         \and
        Jean-Pierre Fouque
}

%\authorrunning{Short form of author list} % if too long for running head

\institute{Winslow Strong (\Letter) \and Jean-Pierre Fouque \at Department of Statistics and Applied Probability, University of California,
Santa Barbara, CA 93106-3110 \\
              \email{strong@pstat.ucsb.edu}           %  \\
%             \emph{Present address:} of F. Author  %  if needed
\and Jean-Pierre Fouque \at \email{fouque@pstat.ucsb.edu}
}

\date{Received: 21 August 2010 / Accepted: 28 December 2010}

% The correct dates will be entered by the editor

\maketitle

\global\long\global\long\global\long\def\norm#1{\left\Vert #1\right\Vert }

\global\long\global\long\global\long\def\abs#1{\left\vert #1\right\vert }

\global\long\global\long\global\long\def\set#1{\left\{  #1\right\}  }

\global\long\global\long\global\long\def\eps{\varepsilon}

\global\long\global\long\global\long\def\cA{\mathcal{A}}

\global\long\global\long\global\long\def\cB{\mathcal{B}}

\global\long\global\long\global\long\def\C{\mathbb{C}}

\global\long\global\long\global\long\def\bC{\mathbb{C}}

\global\long\global\long\global\long\def\cC{\mathcal{C}}

\global\long\global\long\global\long\def\cD{\mathcal{D}}

\global\long\global\long\global\long\def\cE{\mathcal{E}}

\global\long\global\long\global\long\def\cF{\mathcal{F}}

\global\long\global\long\global\long\def\bF{\mathbb{F}}

\global\long\global\long\global\long\def\cG{\mathcal{G}}

\global\long\global\long\global\long\def\fG{\mathfrak{G}}

\global\long\global\long\global\long\def\bG{\mathbb{G}}

\global\long\global\long\global\long\def\bH{\mathbb{H}}

\global\long\global\long\global\long\def\cH{\mathcal{H}}

\global\long\global\long\global\long\def\cI{\mathcal{I}}

\global\long\global\long\global\long\def\cK{\mathcal{K}}

\global\long\global\long\global\long\def\bL{\mathbb{L}}

\global\long\global\long\global\long\def\cL{\mathcal{L}}

\global\long\global\long\global\long\def\cM{\mathcal{M}}

\global\long\global\long\global\long\def\N{\mathbb{N}}

\global\long\global\long\global\long\def\bN{\mathbb{N}}

\global\long\global\long\global\long\def\cN{\mathcal{N}}

\global\long\global\long\global\long\def\cO{{\normalcolor \mathcal{O}}}

\global\long\global\long\global\long\def\bP{\mathbb{P}}

\global\long\global\long\global\long\def\cP{\mathcal{P}}

\global\long\global\long\global\long\def\bQ{\mathbb{Q}}

\global\long\global\long\global\long\def\cQ{\mathcal{Q}}

\global\long\global\long\global\long\def\R{\mathbb{R}}

\global\long\global\long\global\long\def\bR{\mathbb{R}}

\global\long\global\long\global\long\def\cR{\mathcal{R}}

\global\long\global\long\global\long\def\fR{\mathfrak{R}}

\global\long\global\long\global\long\def\bS{\mathbb{S}}

\global\long\global\long\global\long\def\cS{\mathcal{S}}

\global\long\global\long\global\long\def\cT{\mathcal{T}}

\global\long\global\long\global\long\def\bT{\mathbb{T}}

\global\long\global\long\global\long\def\bU{\mathbb{U}}

\global\long\global\long\global\long\def\bV{\mathbb{V}}

\global\long\global\long\global\long\def\cX{\mathcal{X}}

\global\long\global\long\global\long\def\cY{\mathcal{Y}}

\global\long\global\long\global\long\def\bZ{\mathbb{Z}}

\global\long\global\long\global\long\def\I{\mathbf{1}}

\global\long\global\long\global\long\def\cemetery{\dagger}

\global\long\global\long\global\long\def\D#1#2{\frac{\partial#1}{\partial#2}}

\global\long\global\long\global\long\def\DD#1#2{\frac{\partial^{2}#1}{\partial#2^{2}}}

$\global\long\global\long\global\long\def\vec#1{\mbox{\boldmath\ensuremath{#1}}}
$

\global\long\global\long\global\long\def\wt#1{\widetilde{#1}}

\global\long\global\long\global\long\def\1{\mathbf{1}}

\global\long\global\long\global\long\def\asto{\xrightarrow{\text{a.s.}}}

\global\long\global\long\global\long\def\Lto{\xrightarrow{L^{1}}}

\global\long\global\long\global\long\def\Lpto{\xrightarrow{L^{p}}}

\global\long\global\long\global\long\def\asLto{\xrightarrow{L^{1}, \text{ a.s.}}}

\global\long\global\long\global\long\def\imply{\Rightarrow}

\global\long\global\long\global\long\def\nimply{\nRightarrow}

\global\long\global\long\global\long\def\limply{\Longrightarrow}

\global\long\global\long\global\long\def\leftexp#1#2{{{\vphantom{#2}}^{#1}{#2}}}

\global\long\global\long\global\long\def\essinf{\textrm{ess}\inf}

\global\long\global\long\global\long\def\esssup{\textrm{ess}\sup}

\global\long\global\long\global\long\def\var{\textrm{Var}(}

\begin{abstract}
In 1999 Robert Fernholz observed an inconsistency between the normative
assumption of existence of an equivalent martingale measure (EMM)
and the empirical reality of diversity in equity markets. We explore
a method of imposing diversity on market models by a type of antitrust
regulation that is compatible with EMMs. The regulatory procedure
breaks up companies that become too large, while holding the total
number of companies constant by imposing a simultaneous merge of other
companies. The regulatory events are assumed to have no impact on
portfolio values. As an example, regulation is imposed on a market
model in which diversity is maintained via a log-pole in the drift
of the largest company. The result is the removal of arbitrage opportunities 
from this market while maintaining the market's diversity.
\keywords{Diversity \and Arbitrage \and Relative arbitrage \and
Equivalent martingale measure \and Antitrust \and Regulation}
% \PACS{PACS   \and  \and }
%\subclass{91G10 \and 91B70 \and 60G44}
\par \addvspace \medskipamount \noindent \textbf{JEL Classification}\enspace G11
\end{abstract}

\section{Introduction}

What does the empirical phenomenon of diversity in equity markets
imply about investment opportunities in those markets? The answer
depends on the mechanism by which diversity is maintained.

The notion of diversity, the condition that no company's capitalization
(shares multiplied by stock price) may approach that of the entire
market, was introduced by Robert Fernholz in the paper \citet{Art:Fernholz:OnDivEqMark:1999}
and the book \citet{Book:Fernholz:SPT:2002} (see also the recent
review \citet{Art:Karatzas&Fernholz:SPTReview:2009}). He made the
observation in \citet{Art:Fernholz:PortGenFunct:1999} that one of
the most useful tools of financial mathematics, the equivalent martingale
measure (EMM), implies for a large class of models something grossly
inconsistent with real markets: lack of diversity. Historically, the
major world stock markets have been diverse, and they should be expected
to remain so as long as they are subject to a form of antitrust regulation
that prevents concentration of capital into a single company.

Fernholz demonstrated under common assumptions of financial market
modeling that diverse market models necessarily admit strong relative
arbitrage with respect to the market portfolio. Portfolio $\pi$ is
a strong relative arbitrage with respect to portfolio $\rho$ on horizon
$[0,T]$ if $\pi$ strictly outperforms $\rho$ at time $T$ with probability
one. A sufficient set of assumptions are: capitalizations are modeled
by It\^o processes that pay no dividends, trading may occur in continuous
time with no transaction costs, and the covariance process of the
log capitalizations is uniformly elliptic. Importantly, the relative
arbitrage portfolios of Fernholz do not depend on the parameters of
the market, and therefore do not require estimation of these parameters
to construct in practice. They are long-only portfolios (no short
sales) derived from portfolio generating functions (see \citet{Art:Fernholz:PortGenFunct:1999,Book:Fernholz:SPT:2002,Art:FernKaratzKard:DiversityAndRelArb:2005,Art:Karatzas&Fernholz:SPTReview:2009,Art:Fernholz&Karatzas:RelArbVolStab:2005}),
requiring only the weights of the market portfolio as input. If, additionally,
the covariance process is bounded from above uniformly in time, then
no equivalent local martingale measure (ELMM) is possible for such
models. Therefore the fundamental theorem of asset pricing of \citet{Art:DelbSchach:FundThmMathFin:1994}
implies that they admit a free lunch with vanishing risk (FLVR).

To make the case that the argument above pertains to the existence
of (approximate) relative arbitrages in real markets, dividends must
be taken into account. Dividends provide a means for large companies
to slow their growth in terms of capitalization while still generating
competitive total returns (stock return + dividend return) for their
shareholders. An exploratory statistical analysis in \citet{Art:Fernholz:AntitrustNoArb:1998}
of the dividends paid by companies traded on U.S. equity exchanges
from 1967-1996 suggests that this factor has historically been insufficient
to jeopardize the argument for existence of relative arbitrage with
respect to the U.S. market portfolio over this period, before accounting
for transaction costs.

It is not easy to formulate diverse It\^o process models (however
see \citet{Art:OstRhein:ArbOppNonEquivMeasChange:2007} for a clever
probabilistic construction utilizing a non-equivalent measure change).
Almost all market models commonly used in the literature, including
geometric Brownian motion, are not diverse, and therefore do not accurately
model reality. Any diverse It\^o process model with uniformly elliptic
and uniformly bounded covariance must have the characteristic that
the difference in the rate of expected return of the largest company,
compared to some other company, diverges to $-\infty$ as the largest
approaches a relative size cap (see \citet{Art:FernKaratzKard:DiversityAndRelArb:2005}).
Some possible economic rationale to support this type of model includes:
difficulties in achieving high return on investment for very largely
capitalized companies and the cost of antitrust suits brought against
such companies.

Since the onset of antitrust regulation in the U.S. in the late 19th
century, there have been two main regulatory methods of dealing with
companies which get too large: antitrust suits or fines, and antitrust
breakup. The latter is rarely used, with some notable examples being
the breakups of Standard Oil (1911) \citet{Misc:CourtCase:StdOilBreakup1911}
and AT\&T (1982) \citet{Misc:CourtCase:AT&TBreakup1982}. Suits or
fines are used much more often than breakups to discipline companies
that are deemed to be dominating their market in an unfair manner.
Recent examples in Europe include Microsoft in \citet{Misc:CourtCase:AntitrustMicrosoft2004}
in 2004 (\euro497 million) and Intel in \citet{Misc:EUFine:Intel2009}
in 2009 (\euro1.06 billion), both being fined by the European Union
for anticompetitive practices. Models in which diversity is maintained
via the rate of expected return of any company diverging to $-\infty$
as that company's relative size becomes very large can be interpreted
as continuous-path approximations of the case where suits or fines
are used to regulate big companies. Models in which regulatory breakup
is the primary means of maintaining market diversity have not been
well-studied from a mathematical point of view in the financial mathematics
literature. They are the subject of this paper.

When a company is fined money, this directly and adversely affects
the value of the company, so the risk of antitrust fines is a mark
against investing in large companies. In contrast, the key mathematical
feature of a corporate breakup with regards to investment is that
capital need not be removed from the system. That is, when a company
is broken into parts, no net value needs to be lost. Indeed, from
a regulator's perspective, avoidance of monopolies maintains the viability
of an industry's innovation and growth prospects. Although it need
not be the case in practice, for simplicity, we make the modeling
assumption that total market values of companies, as well as the portfolio
values of investors, are \emph{conserved} at each regulatory breakup.
The conservation of portfolio value implies that the capital gains
process from investment in equity is not the stochastic integral of
the trading strategy (shares of equity) with respect to the stock
capitalization process. Instead, a net capitalization process, with
the finite number of regulatory jumps removed, plays the role of integrator.

Another assumption we make is that the number of companies remains
constant. This may seem inconsistent with the breakup of companies,
but in our typical example of regulation we balance the number of
companies in the economy by also requiring that two companies merge
into a new company at the same time as regulation splits a company
into two. This is imposed mainly for mathematical simplicity. It isolates
the effect of regulation on diversity and arbitrage while working
in the familiar context of $\R^{n}$-valued It\^o processes.

As an application we examine a regulated form of a log-pole market
model, a diverse model admitting relative arbitrage with respect to
the market portfolio. The regulation procedure removes the arbitrage
opportunities from the market, resulting in a diverse and arbitrage-free market. Furthermore, the regulated form satisfies the notion
of {}``sufficient intrinsic volatility'' of the market, a more general
sufficient condition for relative arbitrage in unregulated models
(see \citet{Art:Fernholz&Karatzas:RelArbVolStab:2005}). These results
do not contradict the work of Fernholz et al., because in our
model it is the regulated capitalization process that is diverse and
the net capitalization process (which has regulatory jumps removed)
that has an EMM.

This paper is organized as follows. Section \ref{Sec:Premodel} defines
the class of premodels for the regulation procedure, the admissible
trading strategies, portfolios, and the notion of diversity. In Section
\ref{Sec:Overview} we introduce the regulatory procedure, including
defining the regulatory mapping and the triggering mechanism for regulation.
Our exemplar of regulation, the split-merge rule, is also introduced,
which essentially splits the biggest company and forces the smallest
two companies to merge at a regulatory event. The issue of arbitrage
in regulated markets is thoroughly explored and compared to the results
of Fernholz et al. regarding arbitrage and diversity in unregulated
mode. Section \ref{Sec:ExamplesRegMarkets} applies the regulatory
procedure to geometric Brownian motion and to a log-pole market model
to illustrate the compatibility of diversity and EMMs in regulated
models. Section \ref{Sec:Conclusions} presents some concluding remarks
and directions for future research. Section \ref{Sec:Proofs} contains
several proofs.

\section{\label{Sec:Premodel}Premodel}

We first introduce the class of models that we will consider for regulation.
We also define the set of trading strategies that are admissible for
discussions of arbitrage and define the notion of a portfolio for
discussions of relative arbitrage.

The stock capitalization process $\wt X=(\wt X_{1,t},\ldots,\wt X_{n,t})_{t\ge0}^{\prime}$
represents the capitalizations (number of shares multiplied by stock
price) of the $n\ge2$ companies which are traded on an exchange,
where the notation $A^{\prime}$ denotes the transpose of the matrix
$A$. The stock capitalizations are each assumed to be almost surely
(a.s.) strictly positive for all time, with $\wt X$ taking values
in the open, connected, conic set $O^{x}\subseteq\R_{++}^{n}:=(0,\infty)^{n}$.
%We use the notation {}``$\boxdot$'' to denote the Hadamard entrywise
%product in order to write many of the equations of financial mathematics
%more concisely. For $k\times l$ matrices $Q,R$, we have $[Q\boxdot R]_{i,j}:=Q_{i,j}R_{i,j},\;1\le i\le k,\;1\le j\le l$.
The dynamics of $\wt X$ is determined by the stochastic differential
equation (SDE)
\begin{align}
d\wt X_{i,t} & =\wt X_{i,t} \left(b_i(\wt X_{t})dt+\sum_{\nu=1}^d\sigma_{i\nu}(\wt X_{t})dW_{\nu,t}\right),\quad 1\le i\le n,\label{Eq:SDE}\\
\wt X_{0} & =x_{0}\in O^{x},\end{align}
for which $(\Omega,\cF,\bF,\wt X,W,P)$  is a solution, where $W$
 is a $d$-dimensional Brownian motion with $d\ge n$. The functions
$b(\cdot)$ and $\sigma(\cdot)$ are assumed to be locally bounded Borel functions.
%and for notational ease we will often refer to processes $b(\wt X)$
%and $\sigma(\wt X)$ as $\wt b$ and $\wt{\sigma}$, respectively.
We require that the SDE (\ref{Eq:SDE}) satisfies strong existence
and pathwise uniqueness for any initial $x_{0}\in O^{x}$, and that
$P(\wt X_{t}\in O^{x},\;\forall t\ge0)=1$.  We shall only consider
volatility matrices $\sigma(x)\in\R^{n\times d}$ having full rank
$n$, $\forall x\in O^{x}$, which guarantees that no stock's risk
can be completely hedged over any time interval by investment in the
other stocks. We assume that  $\cF$ and $\cF_{0}$ contain $\cN$,
the $P$-null sets, and consider only the case where the filtration
is the augmented Brownian filtration  $\bF=\bF^{W}:=\{\cF_{t}^{W}\}_{0\le t<\infty}$,
where $\cF_{t}^{W}:=\sigma\bigl(\bigl\{ W_{s}\bigr\}_{0\le s\le t}\bigr)\bigvee\cN$.

The process $B$ represents a money market account, for which we impose
that a.s. $B\equiv1$, corresponding to zero interest rate. Other
standing assumptions are that capitalizations are exogenously determined,
no dividends are paid, markets are perfectly liquid, trading is frictionless
(no transaction costs) and may occur in arbitrary quantities, and
there are no taxes.

\subsection{Investment in the Premodel}

The model for investment in the risky assets of the premodel is of
the usual type for equity market models. A \emph{trading strategy}
$\wt H_{t}^{\prime}:=(\wt H_{1,t},\ldots,\wt H_{n,t})$ is a predictable
process representing the number of shares held of each stock. Note
that since $\wt X$ is a stock capitalization process, the number
of shares outstanding of each company has effectively been normalized
to one, and so $\wt H$ is with respect to this one share. The \emph{wealth
process }$V^{w,\wt H}$ associated to trading strategy $\wt H$ is
assumed to be \emph{self-financing}, so satisfies \begin{align*}
\wt V_{t}^{w,\wt H} & =\wt H_{t}^{B}+\wt H_{t}^{\prime}\wt X_{t}=w+(\wt H\cdot\wt X)_{t},\end{align*}
where $w$ is the initial wealth and $\wt H^{B}$ is the number of
shares of money market account. We follow Delbaen and Schachermayer's
definition of admissible trading strategies from \citet{Book:DelbSchach:ArbBook:2006}.
\begin{definition}
\emph{\label{Def:AdmTradStrat}Admissible trading strategies }are
predictable processes $\wt H$ such that
\begin{enumerate}
\item $\wt H$ is $\wt X$-integrable, that is, the stochastic integral
$\wt H\cdot\wt X = \big(\int_{0}^{t}\wt H_{s}d\wt X_{s}\big)_{t\ge0}$
is well-defined in the sense of stochastic integration theory for
semimartingales.
\item There is a constant $R$ such that a.s. \begin{align}
(\wt H\cdot\wt X)_{t} & \ge-R,\quad\forall t\ge0.\label{Eq:AdmStratsBB}\end{align}

\end{enumerate}
\end{definition}
The second restriction is designed to rule out {}``doubling strategies''
(see \citet{Book:KS:MathFin:1998}, p.8) and represent the realistic
constraint that credit lines are limited.

It will also be useful in the context of relative arbitrage to develop
the notion of a portfolio, \emph{\`a la} \citet{Art:Karatzas&Fernholz:SPTReview:2009}.
\begin{definition}
\label{Def:Portfolio}A \emph{portfolio }$\wt{\pi}$ is an $\bF$-progressively
measurable $n$-dimensional process bounded uniformly in $(t,\omega)$,
with values in the set\begin{align}
\bigcup_{\kappa\in\N}\bigl\{(\pi_{1},\ldots\pi_{n})\in\R^{n}\mid\pi_{1}^{2}+\ldots+\pi_{n}^{2}\le\kappa^{2},\;\sum_{i=1}^{n}\pi_{i}=1\bigr\}.\label{Eq:PortSet}\end{align}
A \emph{long-only} \emph{portfolio} $\wt{\pi}$ is a portfolio that
takes values in the unit simplex\begin{align*}
\Delta^{n} & :=\bigl\{(\pi_{1},\ldots\pi_{n})\in\R^{n}\mid\pi_{1}\ge0,\ldots\pi_{n}\ge0,\;\sum_{i=1}^{n}\pi_{i}=1\bigr\}.\end{align*}

\end{definition}
A portfolio $\wt{\pi}$ represents the fractional amount of an investor's
wealth invested in each stock. In contrast to a trading strategy,
no borrowing from or lending to the money market is allowed when investment
occurs via a portfolio.  This requirement may be dropped and (\ref{Eq:PortSet})
may be relaxed in favor of more general integrability conditions,
for example see \citet{Art:DFernKaratz:OnOptimalArbitrage:2010}.
However for our purposes here, these restrictions suffice.

For $w\in\R_{++}$, the wealth process $\wt V^{w,\wt{\pi}}$ corresponding
to a portfolio is defined to be the solution to \begin{align}
d\wt V_{t}^{w,\wt{\pi}} & =\wt V_{t}^{w,\wt{\pi}}\sum_{i=1}^{n}\wt{\pi}_{i,t}\frac{d\wt X_{i,t}}{\wt X_{i,t}},\nonumber \\
 & =(\wt V_{t}^{w,\wt{\pi}})\wt{\pi}_{t}^{\prime}\left[b(\wt{X}_t)dt+\sigma (\wt{X}_t ) d\wt W_{t}\right],\label{Eq:PortSDE}\end{align}
which by use of It\^o's formula can be verified to be\begin{align}
\wt V_{t}^{w,\wt{\pi}} & =w\exp\left\{ \int_{0}^{t}\wt{\gamma}_{\wt{\pi},s}ds+\int_{0}^{t}\wt{\pi}_{s}^{\prime}\sigma(\wt{X}_s) d\wt W_{s}\right\} ,\quad\forall t\ge0,\label{Eq:PortExpRep}\end{align}
where \begin{align*}
\wt{\gamma}_{\wt{\pi}} & :=\wt{\pi}^{\prime}b({\wt X})-\frac{1}{2}\wt{\pi}^{\prime} a(\wt{X}) \wt{\pi}\qquad\mbox{and}\qquad a(\cdot):=\sigma(\cdot) \sigma^{\prime} (\cdot).\end{align*}
The process $\wt{\gamma}_{\wt{\pi}}$ is called the \emph{growth rate}
of the portfolio $\wt{\pi}$, and $a(\wt{X})$ is called the \emph{covariance
process}. See \citet{Art:Karatzas&Fernholz:SPTReview:2009} for more
details on the properties of these processes.

The definitions of the wealth process $\wt V^{w,\wt{\pi}}$ corresponding
to a portfolio and $\wt V^{w,\wt H}$ corresponding to a trading strategy
are consistent in the sense that any portfolio has an a.s. unique
corresponding admissible trading strategy yielding the same wealth
process from the same initial wealth. The corresponding trading strategy
$\wt{H}^{w,\wt{\pi}}$ can be obtained from \begin{align}
\wt H_i^{w,\wt{\pi}} & = \frac{\wt{\pi}_i \wt V^{w,\wt{\pi}}}{\wt{X}_i}, \quad 1\le i \le n.\label{Eq:Shares-PortCorrPremodel}\end{align}

The \emph{market portfolio }$\wt{\mu}$ is of particular interest
since {}``beating the market'' is often a desirable goal for investors.
The market portfolio is simply the relative capitalization of each
company in the market with respect to the total: \begin{align*}
\wt{\mu}_{i,t} & :=\mu_{i}(\wt X_{t}):=\frac{\wt X_{i,t}}{\sum_{j=1}^{n}\wt X_{j,t},},\quad1\le i\le n.\end{align*}
The market portfolio is a passive portfolio, meaning that once the initial portfolio $\wt{\mu}$ is setup, it is not traded.  The wealth process $\wt{V}_t^{w,\wt{\mu}}$ is proportional to the total capitalization of the market, as seen by
\begin{align*}
\wt{V}_t^{w,\wt{\mu}} = \Big(\frac{w}{\sum_{j=1}^n \wt{X}_{j,0}}\Big)\sum_{j=1}^n \wt{X}_{j,t},\quad \forall t\ge 0.
\end{align*}
Since the stock capitalization process $\wt X$ a.s. takes values
in $O^{x}\subseteq\R_{++}^{n}$, then for $O^{\mu}:=\mu(O^{x})$,
we have that a.s., $\forall t\ge0$,\begin{align*}
\wt{\mu}_{t}\in O^{\mu}\subseteq\mu(\R_{++}^{n})=\Delta_{+}^{n} & :=\left\{ (\pi_{1},\ldots,\pi_{n})\in\R^{n}\mid\pi_{1}>0,\ldots,\pi_{n}>0, \; \sum_{i}^{n}\pi_{i}=1\right\} .\end{align*}
The closure of a set $A\subseteq\R_{++}^{n}$ will be referred to
as $\bar{A}$ and, unless otherwise stated, is taken with respect
to the subspace topology of $\R_{++}^{n}$, and similarly for subsets
of $\Delta_{+}^{n}$. For example, $\bar{\R}_{++}^{n}=\R_{++}^{n}$
and $\bar{\Delta}_{+}^{n}=\Delta_{+}^{n}$.

\subsection{Diversity}

The notion of diversity entails that no company may ever become too
big in terms of relative capitalization. For generalizations to this
notion and their implications see \citet{Art:FernKaratzKard:DiversityAndRelArb:2005}.
Diversity is a realistic criterion for a market model to satisfy,
since it has held empirically in developed equity markets over time
and should be expected to continue to hold as long as antitrust regulation
prevents capital from concentrating in a single company. In discussions
of diversity it is useful to adopt the reverse-order-statistics notation.
That is, for $x\in\R^{n}$,\begin{align*}
x_{(1)} & \ge x_{(2)}\ge\ldots\ge x_{(n)}.\end{align*}

\begin{definition}
\label{Def:Diversity}A premodel is \emph{diverse }on\emph{ $[0,T]$}
if there exists $\delta\in(0,1)$ such that a.s.\begin{align*}
\wt{\mu}_{(1),t}<1-\delta,\qquad\forall\quad0\le t\le T.\end{align*}
A premodel is\emph{ weakly diverse }on \emph{$[0,T]$ }if there exists
$\delta\in(0,1)$ such that \begin{align*}
\frac{1}{T} & \int_{0}^{T}\wt{\mu}_{(1),t}dt<1-\delta,\quad\mbox{a.s.}\end{align*}
We will not make much use of diversity until later on, but it is good
to keep the definition in mind when considering the regulation procedure proposed herein.
\end{definition}

\section{\label{Sec:Overview}Regulated Market Models}

\subsection{Overview and Modeling Assumptions}

The notion of regulation we introduce consists of confining the market
weights (except at exit times) in an open set $U^{\mu}$ by a regulatory
procedure that
\begin{itemize}
\item conserves the number of companies in the market;
\item conserves total market capital;
\item conserves portfolio wealth;
\item causes a jump in company capitalizations.
\end{itemize}
Upon exit from $U^{\mu}$, the market weights are mapped back into
$U^{\mu}$ by a deterministic mapping $\fR^{\mu}$ applied to $\wt{\mu}$
at its exit point. Then $\mu$ diffuses according to the SDE (\ref{Eq:SDE})
until it exits from $U^{\mu}$ again. The cycle of diffusion and regulation
continues on indefinitely, determining a regulated market weight process
$\mu$. This idea will be made precise in the following subsection.

The economic motivation behind the regulated market models presented
in this paper is to study markets with the feature that companies
may merge and split, possibly forced to do so by a regulator, with
an aim to explore the ramifications for diversity and arbitrage in
these markets. In order to avoid what the authors believe to be unnecessary
mathematical complications in the study of these notions, we require
that splits and merges only occur simultaneously and in pairs, so
that the number of companies in the economy remains a constant. For
example, the biggest company may split into two, and simultaneously
the smallest two merge into one.

Two crucial assumptions of our regulated market models are that total
market capital and portfolio wealth are conserved at each regulation
event. These assumptions are indeed idealizations, but the authors
believe that the former is a reasonable starting point for studying
splits and merges imposed into otherwise continuous path premodels,
while the latter is then sensible in consideration of the below remark.
\begin{remark}
\label{Rem:YHatStochInt}Consider a company being split into two smaller
companies with capitalization fractions $\rho$ and $1-\rho$ relative
to the parent company. If each investor's money in the parent company
is also broken up so that they are left with fraction $\rho$ invested
in the first offspring and $1-\rho$ invested in the second offspring
immediately following the split, then individual portfolio wealth
and total market capital are conserved. This mapping of portfolio
wealth does not impose any constraints on the trading strategies or
portfolios available in the regulated market. Since trading occurs
in continuous time, any investor may simply rearrange all of her money
just after regulation. That our investor may do this without affecting
market prices reflects the assumption that stock capitalizations are
exogenously determined, that is, our investor is small relative to
the market, and her behavior has negligible impact on asset prices.
\end{remark}
The alternative to the wealth conservation assumption would be to
impose a random jump in portfolio wealth at regulation. This would
be compelling for studies of event-driven arbitrage, but since here
our purpose is to study the structural-type arbitrage arising from
diversity, we feel that this is a reasonable omission.

\subsection{\label{Sec:StrongSDE}Regulated Markets}

In this section we will construct the regulated stock process by means
of induction via the diffusion-regulation cycle outlined in the previous
subsection. Since the SDE (\ref{Eq:SDE}) for $\wt X$ satisfies strong
existence and pathwise uniqueness, then we need not pass to a new
probability space to construct the regulated model. Extensions are
possible when (\ref{Eq:SDE}) merely satisfies weak existence and
weak uniqueness, but for simplicity of presentation, we do not pursue
these generalizations here.
\begin{definition}
A \emph{regulation rule} $\fR^{\mu}$ with respect to the open, nonempty
\emph{regulatory set} $U^{\mu}\subseteq O^{\mu}\subseteq\Delta_{+}^{n}$
is a Borel function \begin{align*}
\fR^{\mu} & :\partial U^{\mu}\to U^{\mu}.\end{align*}

\end{definition}
The regulation rule $(U^{\mu},\fR^{\mu})$ uniquely determines the
following set and capital-conserving map of stock capitalizations:
\begin{align*}
U^{x} & :=\mu^{-1}(U^{\mu})\subseteq O^{x},\\
\fR^{x} & :\partial U^{x}\to U^{x},\\
\fR^{x}(x) & :=\left(\sum_{i=1}^{n}x_{i}\right)\fR^{\mu}(\mu(x)).\end{align*}
The inclusion $U^{x}\subseteq O^{x}$ follows from our assumption
that $O^{x}$ is conic, which implies $O^{x}=\mu^{-1}(O^{\mu})$.
 The set $U^{x}$ is conic, that is $x\in U^{x}\imply\lambda x\in U^{x},\;\forall\lambda>0$,
allowing any total market value for a given $\mu\in U^{\mu}$. Therefore,
the  market capitalization $M$ is a degree of freedom for the regulatory
mapping, in the sense that $\mu(\fR^{x}(x))=\mu(\fR^{x}(\lambda x))$,
$\forall\lambda>0$. Specification of $(U^{x},\fR^{x})$ or $(U^{\mu},\fR^{\mu})$
uniquely determines the other, so we refer to either as {}``regulation
rules,'' and in discussion drop the labels and refer to them as $(U,\fR)$.

Define the following processes and random variables: \begin{align*}
W^{1} & :=W, &  &  & X^{1} & :=\wt X,\\
\tau_{0} & :=0, &  &  & \tau_{1} & :=\varsigma_{1}:=\inf\left\{ t>0\mid\mu(X_{t}^{1})\notin U^{\mu}\right\} .\end{align*}
The process $X^{1}$ will serve as the first piece of the regulated
capitalization  process on the stochastic interval $[0,\tau_{1}]$$:=\left\{ (t,\omega)\in[0,\infty)\times\Omega\mid0\le t\le\tau_{1}(\omega)\right\} $.
At $\tau_{1}$, $X^{1}$ has just exited $U^{x}$, so the regulation
procedure maps the capitalization process to $\fR^{x}(X_{\varsigma_{1}}^{1})$,
and the regulated capitalization process continues from that point
according to the dynamics given by the SDE (\ref{Eq:SDE}). To implement
this, define the following variables and processes inductively, $\forall k\in\N$
on $\{\tau_{k-1}<\infty\}$, terminating if $P(\tau_{k-1}<\infty)=0$:
\begin{align}
W_{t}^{k} & :=W_{\tau_{k-1}+t}-W_{\tau_{k-1}},\quad\forall t\ge0,\nonumber \\
dX_{i,t}^{k} & =X_{i,t}^{k} \left(b_{i}(X_{t}^{k})dt+\sum_{\nu=1}^{d}\sigma_{i\nu}(X_{t}^{k})dW_{\nu,t}^{k}\right), \quad 1\le i \le n, \label{Eq:X^kSDE}\\
X_{0}^{k} & =\begin{cases}
y_{0}\in U^{x}, & \mbox{for }k=1,\\
\fR^{x}(X_{\varsigma_{k-1}}^{k-1}), & \mbox{for }k>1,\end{cases}\nonumber \\
\varsigma_{k} & :=\inf\left\{ t>0\mid X_{t}^{k}\notin U^{x}\right\} ,\nonumber \\
\tau_{k} & :=\sum_{j=1}^{k}\varsigma_{j}.\nonumber \end{align}
If for some $k\in\N$ the induction terminates because $P(\tau_{k-1}<\infty)=0$,
then on $\{\tau_{k-1}<\infty\}$, $\forall m\ge k$ define $X^{m}\equiv y_{0}\in U^{x}$,
$\tau_{m}=\infty$, $\varsigma_{m}=0$. Use these same definitions
$\forall m\in\N$ on $\{\tau_{m-1}=\infty\}$. These cases are included
for completeness and their specifics are irrelevant for the subsequent
development.

By the strong Markov property of Brownian motion and stationarity of its increments,
if $P(\tau_{k-1}<\infty)>0$, then for \begin{align*}
\cF_{t}^{k} & :=\cF_{\tau_{k-1}+t},\qquad\mbox{ }\bF^{k}:=\{\cF_{t}^{k}\}_{t\ge0},\end{align*}
 $(W^{k},\bF^{k})$ is a Brownian motion on $\{\tau_{k-1}<\infty\}$,
that is, on $(\Omega\cap\{\tau_{k-1}<\infty\},\cF\cap\{\tau_{k-1}<\infty\})$.
The SDE (\ref{Eq:X^kSDE}) for $k\ge2$ has the same form as the SDE
(\ref{Eq:SDE}) for $\wt X$, but with $W^{k}$ in place of $W$,
and with initial condition $X_{0}^{k}=\fR^{x}(X_{\varsigma_{k-1}}^{k})$
a.s. on $\{\tau_{k-1}<\infty\}$. Therefore, on $\{\tau_{k-1}<\infty\}$
by strong existence, there exists $X^{k}$ adapted to $\bF^{k}$ satisfying
(\ref{Eq:X^kSDE}).

Each $\varsigma_{k}$ is a stopping time with respect to $\bF^{k}$,
since it is the hitting time of the closed set $\R^{n}\setminus U^{x}$
by the continuous process $X^{k}$. Therefore each $\tau_{k}$ is an $\bF$-stopping
time. Since $\fR^{x}(X_{\varsigma_{k-1}}^{k-1})\in U^{x}$, then
$\varsigma_{k}>0$ and $\tau_{k}>\tau_{k-1}$ both a.s. on $\{\tau_{k-1}<\infty\}$,
for all $k$ such that $P(\tau_{k-1}<\infty)>0$. Note that under
this construction there is the possibility of explosion, that is,
of $\tau_{\infty}:=\lim_{k\to\infty}\tau_{k}<\infty$. This will be
considered in greater detail later on.

We are now ready to define the regulated capitalization process $Y$
by pasting together and shifting the $\{X^{k}\}_{1}^{\infty}$ at
the $\{\tau_{k}\}_{1}^{\infty}$ as follows.
\begin{definition}
With respect to regulation rule $(U,\fR)$ and initial point $y_{0}\in U^{x}$,
the \emph{regulated capitalization process} is defined as \begin{align}
Y_{t}(\omega) & :=\begin{cases}
X_{0}^{1}\1_{\{0\}}(t)+\sum_{k=1}^{\infty}\1_{(\tau_{k-1},\tau_{k}]}(t,\omega)X_{t-\tau_{k-1}}^{k}(\omega), & \forall(t,\omega)\in[0,\tau_{\infty}),\\
X_{0}^{1}, & \forall(t,\omega)\notin[0,\tau_{\infty}),\end{cases}\label{Eq:YDef}\end{align}
where $P(X_{0}^{1}=y_{0})=1$. If $P(\tau_{\infty}=\infty)=1$, then
we call the triple $(y_{0},U,\fR)$\emph{ viable} for the premodel.
\end{definition}
To count the number of regulations by time $t$, let $N_{t}:=\sum_{k=1}^{\infty}\1_{\{t>\tau_{k}\}}$,
$t\ge0$. Since each $\tau_{k}$ is a stopping time, $N$
is $\bF$-adapted. Each $X^{k}$ is $\bF^{k}$-progressive,
so by a standard shift argument $Y$ can be seen to be $\bF$-adapted,
and therefore also $\bF$-progressive due to the left-continuity of
its paths.

\subsection{Investment in the Regulated Market}

As remarked earlier, in the regulated model wealth is unaltered by
a regulatory event. Specifically, the wealth process $V^{w,H}$ of
trading strategy $H$ does not jump upon redistribution of market
capital at $\tau_{k}^{+}$. However, the capitalization $Y$ does jump.
This implies that the capital gains of a trading strategy can't be
the stochastic integral of the trading strategy with respect to the
regulated capitalization process. In order to recover the useful tool
of representing the capital gains process as a stochastic integral,
we define a net capitalization process $\widehat{Y}$, which only
accounts for the non-regulatory movements of $Y$.
\begin{definition}
\label{Def:YHat}The \emph{net capitalization process }$\widehat{Y}$
is defined as\begin{align}
\widehat{Y}_{t} & :=\begin{cases}
Y_{t}-\sum_{k=1}^{N_{t}}(\fR^{x}(Y_{\tau_{k}}^{k})-Y_{\tau_{k}}), & \forall(t,\omega)\in[0,\tau_{\infty}),\\
Y_{0}, & \forall(t,\omega)\notin[0,\tau_{\infty}).\end{cases}\label{Eq:YhatDef}\end{align}

\end{definition}
The process $\widehat{Y}$ is $\bF$-adapted since $Y$ and $N$ are
adapted. If the regulated market is viable, then a.s. $\widehat{Y}$
has continuous paths since then a.s. $Y$ has piecewise continuous
paths, jumping only at the $\tau_{k}$. The following representations
of $\widehat{Y}$ will also be useful and are obtainable from the
definitions of $Y$ and $\widehat{Y}$ and (\ref{Eq:X^kSDE}). \begin{align}
\widehat{Y}_{t} & =Y_{0}+\sum_{k=1}^{N_{t}+1}(X_{(t-\tau_{k-1})\wedge\varsigma_{k}}^{k}-X_{0}^{k}),\quad\forall(\omega,t)\in[0,\tau_{\infty})\label{Eq:YhatDeltaX}\\
d\widehat{Y}_{i,t} & =Y_{i,t} \bigl[b_{i}(Y_{t})dt+\sum_{\nu=1}^d\sigma_{i\nu}(Y_{t})dW_{t}\bigr],\quad 1\le i\le n,\quad\mbox{on }[0,\tau_{\infty}).\label{Eq:YhatSDE}\end{align}

The net capitalization process is the correct process to fulfill the
role of integrator for a trading strategy in the regulated market.
To see this, let trading strategy $H$ be the number of shares invested
in the regulated stock process $Y$. The wealth process $V^{H}$ should
be locally self-financing on each stochastic interval $(\tau_{k-1},\tau_{k}]$
and without jumps, so that\begin{align*}
V_{t}^{w,H} & =w+\sum_{k=1}^{N_{t}}\int_{\tau_{k-1}^{+}}^{\tau_{k}}H_{s}dY_{s}+\int_{\tau_{N_{t}}^{+}}^{t}H_{s}dY_{s}.\end{align*}
Then since $\widehat{Y}_{t}-\widehat{Y}_{\tau_{k-1}}=Y_{t}-Y_{\tau_{k-1}}$
on $(\tau_{k-1},\tau_{k}]$ for all $k\in\N$, we have\begin{align*}
V_{t}^{w,H} & =w+\sum_{k=1}^{N_{t}}\int_{\tau_{k-1}}^{\tau_{k}}H_{s}d\widehat{Y}_{s}+\int_{\tau_{N_{t}}}^{t}H_{s}d\widehat{Y}_{s},\\
 & =w+(H\cdot\widehat{Y})_{t}.\end{align*}
This motivates the following natural analog of the usual self-financing
condition.
\begin{definition}
In a viable regulated market, a wealth process $V^{w,H}$ corresponding
to $\widehat{Y}$-integrable trading strategy $H$ is called \emph{self-financing}
in the regulated market if
\begin{align*}
V_{t}^{w,H} & =w+(H\cdot\widehat{Y})_{t},\quad\forall t\ge0.\end{align*}

As in the premodel, in a viable regulated model we will henceforth
assume that all wealth processes are self-financing and that all trading
strategies are $\widehat{Y}$-admissible, which means that $H$ is
$\widehat{Y}$-integrable, and $H\cdot\widehat{Y}$ is a.s. bounded
from below uniformly in time, paralleling Definition \ref{Def:AdmTradStrat}.

\end{definition}
A portfolio in the regulated model will be denoted by $\pi$, and
is a process meeting the requirements of Definition \ref{Def:Portfolio}.
It represents the fractional amount of total wealth invested in the
regulated stocks $Y$.  Paralleling the premodel (\ref{Eq:PortSDE})
for initial wealth $w\in\R_{++}$, the wealth process $V^{w,\pi}$
corresponding to $\pi$ is given by\begin{align}
V_{t}^{w,\pi} & =w\exp\left\{ \int_{0}^{t}\gamma_{\pi,s}ds+\int_{0}^{t}\pi_{s}^{\prime}\sigma (Y_s) dW_{s}\right\} ,\quad\forall t\ge0,\label{Eq:PortExpRepReg}\end{align}
where \begin{align*}
\gamma_{\pi} & :=\pi^{\prime}b(Y)-\frac{1}{2}\pi^{\prime}a(Y)\pi,\qquad a(\cdot)=\sigma(\cdot)\sigma^{\prime}(\cdot).\end{align*}

The market portfolio is the portfolio with the same weights $\mu$
as the market. Note that unlike $\wt H^{w,\wt{\mu}}$, which is constant,
$H^{w,\mu}$ is piecewise constant, jumping at the $\tau_{k}$. All
portfolios, including the market portfolio, have wealth processes
of identical functional form (compare (\ref{Eq:PortExpRepReg}) and
(\ref{Eq:PortExpRep})) in the regulated model and in the premodel.
Therefore, from a mathematical viewpoint, the differences in investment
opportunities in these markets are completely due to the differences
in dynamics of $b(Y)$ and $\sigma(Y)$ compared to $b(\wt X)$ and
$\sigma(\wt X$), which is in turn due to confining $\wt X$ to $U^{x}$
via $\fR^{x}$ to obtain $Y$.

\subsection{\label{SubSec:Split-MergeReg}Split-Merge Regulation}

The exemplar for regulation used in this paper is the split-merge
regulation rule. The basic economic motivation behind split-merge
regulation is that it provides a means for regulators to control the
size of the largest company in the economy. At each $\tau_{k}$, the largest
company is split into two new companies of equal capitalization. In
order to avoid the mathematical complications of a market model with
a variable number of companies, we also impose that at each $\tau_{k}$
the smallest two companies merge, so that the total number of companies
is a constant, $n$. A natural trigger for when regulators might force
a large company to split is company size. For example, regulation
may be triggered when the biggest company reaches $1-\delta$ in relative
capitalization.

The purpose of this subsection is to define the class of split-merge
regulation rules and to find sufficient conditions for the viability
of this class. These results are summarized in Lemma \ref{Lem:baDeltaEntropyPos}.

To identify which company by index occupies the $k$th rank at time
$t$, we use the random function $p_{t}(\cdot)$ so that $\mu_{p_{t}(k),t}=\mu_{(k),t}$,
for $1\le k\le n$. Similarly, for the vector $x:=(x_{1},\ldots,x_{n})$
we use $p(\cdot)$ satisfying $x_{p(k)}=x_{(k)}$, for $1\le k\le n$.
In the event that several components are tied, for example $x_{(k)}=\ldots x_{(k+j)}$,
then ties are settled by $p(k)<\ldots<p(k+j)$.

To define the notion of split-merge regulation, we first define a
regulation prerule, which captures the essential idea but still requires
some technical refinement.
\begin{definition}
In a market where $n\ge3$, a \emph{split-merge} \emph{regulation
prerule} $(U^{\mu},\check{\fR}^{\mu})$ with respect to open, nonempty
regulatory set $U^{\mu}\subseteq O^{\mu}$ is a mapping \begin{align*}
\check{\fR}^{\mu} & :\bar{U}^{\mu}\to\Delta_{+}^{n}\end{align*}
such that \begin{align*}
\check{\fR}^{\mu}(\mu) & =\mu,\quad\forall\mu\in U^{\mu},\end{align*}
and $\check{\fR}^{\mu}\restriction_{\partial U^{\mu}}$ is specified
by the map: \begin{align*}
\mu_{p(1)} & \mapsto\frac{\mu_{p(1)}}{2},\\
\mu_{p(n-1)} & \mapsto\frac{\mu_{p(1)}}{2},\\
\mu_{p(n)} & \mapsto\mu_{p(n)}+\mu_{p(n-1)},\\
\mu_{p(k)} & \mapsto\mu_{p(k)},\quad\forall k:2\le k<n-1.\end{align*}

\end{definition}
The split-merge regulation prerule can be interpreted as splitting
the largest company in half into two new companies and forcing the
smallest two companies to merge into a new company. The condition
$n\ge3$ insures that these companies are distinct. The new companies
from the split are assigned the indices of the previous largest and
the previous second smallest companies. The new company from the merge
is assigned the index of the previous smallest company.
\begin{remark}
Due to the interchange of indices, this interpretation makes economic
sense only in a market model where the companies are taken to be generic,
that is, they have no firm-specific (index-specific) properties. For
example, in a market model where sector-specific correlations are
being modeled, it would not make sense for an oil company resulting
from a split to take over the index of a technology company freed
up from a merge, since the subsequent correlations would not be realistic.
The examples in this paper focus on generic market models, so this
interpretation is sensible for them.
\end{remark}
A split-merge regulation prerule $(U^{\mu},\check{\fR}^{\mu})$ is
not quite suitable for our notion of split-merge regulation, because
in the event that $\mu_{(1),\tau_{k}}=\mu_{(2),\tau_{k}}=\ldots=\mu_{(j),\tau_{k}}$,
we desire that all of these largest companies be broken up, not just
one of them. This can be easily accomplished, however, by repeating
the procedure $n$ times.
\begin{definition}
\label{Def:SplitMergeRule}If $n\ge3$ and split-merge regulation
prerule $(U^{\mu},\check{\fR}^{\mu}$) is into $\bar{U}^{\mu}$, then
we may define\begin{align*}
\fR^{\mu} & :=(\underbrace{\check{\fR}^{\mu}\circ\ldots\circ\check{\fR}^{\mu}}_{n\mbox{ compositions}})\restriction_{\partial U^{\mu}}.\end{align*}
If $\fR^{\mu}$ is into $U^{\mu}$, then we may restrict the codomain
to $U^{\mu}$, and we call the resulting function $(U^{\mu},\fR^{\mu})$
the \emph{split-merge regulation rule }associated with $(U^{\mu},\check{\fR}^{\mu})$.
\end{definition}
Note that the above definition implies that when a split-merge regulation
rule exists, it is a regulation rule. The following technical lemma
will be handy for verifying the viability of split-merge rules. We
use the notation $C_{b}^{2}(\Delta_{+}^{n},\R)$ to denote the continuous
bounded functions from $\Delta_{+}^{n}$ to $\R$ with partial derivatives
continuous and bounded through 2nd order.
\begin{lemma}
\label{Lem:EntropyViability}If the SDE (\ref{Eq:SDE}) has drift
$b(\cdot)$ and volatility $\sigma(\cdot)$ functions which are bounded
on $U^{x}$, and there exists a function $G\in C_{b}^{2}(\Delta_{+}^{n},\R)$
such that the regulation rule $(U,\fR)$ satisfies either \begin{align*}
\inf\left\{ G(\fR^{\mu}(\mu))-G(\mu)\mid\mu\in\partial U^{\mu}\right\}  & >0 \\
\mbox{or} \qquad \sup\left\{ G(\fR^{\mu}(\mu))-G(\mu)\mid\mu\in\partial U^{\mu}\right\} & <0,\end{align*}
where $\partial U^{\mu}$ is the boundary of the set $U^{\mu}$ taken
as a subset of the space $\Delta_{+}^{n}$, then the regulated market
is viable. \end{lemma}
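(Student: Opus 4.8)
The goal is to rule out explosion, i.e., to show $P(\tau_\infty=\infty)=1$, where $\tau_\infty=\lim_{k\to\infty}\tau_k$ and $\tau_k=\sum_{j=1}^k\varsigma_j$. Explosion can only happen if infinitely many regulations occur in finite time, which forces the inter-regulation times $\varsigma_k$ to tend to zero. The plan is to use the function $G$ as a kind of Lyapunov functional: the \emph{continuous} diffusion between regulations moves $G(\wt\mu_t)$ in a controlled (bounded-speed) fashion, while each regulatory jump moves $G$ by a definite nonzero amount of a fixed sign. If infinitely many jumps were crammed into a finite time interval, the jump contributions would have to accumulate while the diffusive contributions could not compensate, and I would derive a contradiction with the boundedness of $G$.

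Concretely, I would work under the assumption (say) that $\inf\{G(\fR^\mu(\mu))-G(\mu)\mid\mu\in\partial U^\mu\}=:2\eta>0$; the $\sup<0$ case is symmetric (replace $G$ by $-G$). First I would apply It\^o's formula to $G(\mu(X^k_t))$ along each diffusion piece on $[0,\varsigma_k]$. Since $G\in C_b^2$ and $b(\cdot),\sigma(\cdot)$ are bounded on $U^x$, and since $\mu(\cdot)$ is a smooth map into the simplex with the required derivatives controlled on $U^\mu$, the drift and the quadratic-variation terms of $G(\mu(X^k_\cdot))$ are bounded by a deterministic constant $C$, uniformly over $k$. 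Thus over the $k$-th diffusion interval the continuous change in $G$ is of the form $(\text{bounded drift})\cdot\varsigma_k+(\text{martingale increment})$. At the regulation time $\tau_k$ the value jumps up by at least $2\eta$. Summing over the first $N$ regulations, I would write
\begin{align*}
G(\mu_{\tau_N^+})-G(\mu_0) & =\sum_{k=1}^{N}\bigl[G(\fR^\mu(\mu_{\varsigma_k}))-G(\mu_{\varsigma_k})\bigr]+\sum_{k=1}^{N}\bigl[G(\mu(X^k_{\varsigma_k}))-G(\mu(X^k_0))\bigr],
\end{align*}
where the first sum is $\ge 2\eta N$ and the second is a sum of bounded-drift plus martingale contributions over the interval $[0,\tau_N]$.

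The main obstacle, and the heart of the argument, is controlling the second (continuous) sum on the event of explosion. The martingale part is a time-changed Brownian-type integral with bounded integrand, so by the Dambis–Dubins–Schwarz representation its total fluctuation over a finite time horizon is a.s. finite; the bounded-drift part over a finite horizon $[0,\tau_\infty)$ is likewise a.s. finite in absolute value. Hence on $\{\tau_\infty<\infty\}$ the right-hand side above would be the sum of a term growing at least linearly in $N$ (namely $\ge 2\eta N\to\infty$) and an a.s.\ finite quantity, forcing $G(\mu_{\tau_N^+})\to+\infty$. This contradicts $G$ being bounded (as $G\in C_b^2$). Therefore $P(\tau_\infty<\infty)=0$, giving viability. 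The delicate points I would take care with are: (i) justifying that the accumulated diffusive change of $G$ over all pieces up to $\tau_\infty$ is genuinely a stochastic integral over the single finite horizon $[0,\tau_\infty)$ — this uses the shift/pasting construction of $Y$ so that the concatenated martingale increments assemble into one continuous local martingale with bounded quadratic variation — and (ii) confirming that the a.s.\ finiteness on $\{\tau_\infty<\infty\}$ holds on that event itself, which I would secure by localizing along the stopping times $\tau_N\wedge T$ and letting $T\to\infty$ after establishing the linear-growth-versus-finite dichotomy.
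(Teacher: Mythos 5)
Your proposal is correct and takes essentially the same route as the paper's own proof: you decompose $G(\mu_t)$ into a continuous It\^o part (whose drift and diffusion integrands are bounded because $G\in C_{b}^{2}(\Delta_{+}^{n},\R)$ and $b,\sigma$ are bounded on $U^{x}$) plus the regulatory jump increments, note that on $\{\tau_{\infty}<\infty\}$ the continuous part converges a.s.\ while the jump sum diverges, and contradict the boundedness of $G$; the paper's process $\widehat{G}$ is exactly your jump-removed quantity. The only cosmetic differences are your reduction of the $\sup<0$ case to the $\inf>0$ case via $G\mapsto-G$ (the paper treats both signs directly) and your appeal to Dambis--Dubins--Schwarz, where the paper simply invokes well-definedness of Lebesgue and stochastic integrals with bounded integrands up to the finite time $\tau_{\infty}$.
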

\begin{proof}
See Section \ref{Sec:Proofs}.
\end{proof}
We turn now to the question of identifying suitable regulatory sets
$U$ for split-merge regulation that are both economically compelling
and generate viable split-merge rules.
\begin{lemma}
\label{Lem:baDeltaEntropyPos}Suppose the following hold:
\begin{enumerate}
\item $n\ge3$.
\item $\delta\in(0,\frac{n-1}{n+1})$.
\item The regulatory set, \textup{\begin{align*}
U^{\mu} & :=\{\mu\in\Delta_{+}^{n}\mid\mu_{(1)}<1-\delta\},\end{align*}
}satisfies $U^{\mu}\subseteq O^{\mu}$.
\item $(U^{\mu},\check{\fR}^{\mu}$) is a split-merge regulation prerule.
\item The functions $b(\cdot)$ and $\sigma(\cdot)$ are bounded on $U^{x}$.
\end{enumerate}
Then the split-merge rule $(U^{\mu},\fR^{\mu})$ associated with $(U^{\mu},\check{\fR}^{\mu})$
exists and is viable.

\end{lemma}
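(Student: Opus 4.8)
The plan is to verify the two assertions separately: first that the $n$-fold composition $\fR^\mu=(\check{\fR}^\mu)^{\circ n}\!\restriction_{\partial U^\mu}$ actually lands in $U^\mu$ (so that the codomain may be restricted and a genuine regulation rule results, per Definition \ref{Def:SplitMergeRule}), and then to produce a generating function meeting the hypotheses of Lemma \ref{Lem:EntropyViability}.

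For existence, the only delicate point is the company created by the merge. Fix $\mu\in\partial U^\mu$, so $\mu_{(1)}=1-\delta$, and suppose $j$ companies are tied at this cap. I would show that a single application of $\check{\fR}^\mu$ leaves the other $j-1$ capped companies untouched (they occupy ranks $2,\dots,j$, and assumption (ii) forces $j<\frac{n+1}{2}\le n-1$, whence $j\le n-2$, so these ranks lie in the untouched middle range $[2,n-2]$), replaces one capped company by two halves of size $\frac{1-\delta}{2}<1-\delta$, and replaces the two smallest companies by their sum. The merged mass is the binding constraint: the $n-j$ non-capped companies total $S=1-j(1-\delta)$, so the two smallest sum to at most $\frac{2S}{n-j}$, and a short computation shows $\frac{2S}{n-j}<1-\delta$ is equivalent to $(1-\delta)(n+j)>2$, which holds because $1-\delta>\frac{2}{n+1}\ge\frac{2}{n+j}$ — this is exactly where hypothesis (ii) enters. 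Hence each application keeps the maximum at or below the cap (so the prerule maps $\partial U^\mu$ into $\bar U^\mu$ and the compositions are well defined) while strictly reducing the number of companies sitting exactly at $1-\delta$. After at most $n-2<n$ applications that number is $0$, i.e. the image lies in $U^\mu$, and the remaining compositions act as the identity there. This proves existence.

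For viability I would invoke Lemma \ref{Lem:EntropyViability}, whose drift/volatility boundedness requirement is supplied verbatim by hypothesis (v); it remains to exhibit $G\in C_b^2(\Delta_+^n,\R)$ with $G\circ\fR^\mu-G$ bounded away from $0$ on $\partial U^\mu$. The natural monotone quantity is Shannon entropy, which strictly increases at each split-merge (a split of mass $a$ adds $a\log 2$, while merging the two smallest removes at most $\frac{2\delta}{n-1}\log 2$, and $(1-\delta)-\frac{2\delta}{n-1}>0$ is again hypothesis (ii)); but entropy is inadmissible here, since its gradient blows up at $\partial\Delta_+^n$ and so it is not $C_b^2$. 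I would therefore use the genuinely $C_b^2$ quadratic $G(\mu):=-\sum_{i=1}^n\mu_i^2$. For one split-merge the change is $G\mapsto G+\frac{a^2}{2}-2bc$, where $a=1-\delta$ and $b,c$ are the two merged weights; since $bc\le\bigl(\frac{S}{n-j}\bigr)^2$, positivity of $\frac{a^2}{2}-2bc$ reduces once more to $(1-\delta)(n+j)>2$, i.e. to hypothesis (ii). As $a=1-\delta$ is fixed and $j$ ranges over a finite set, these increments admit a uniform positive lower bound $\varepsilon_0$; summing over the (at least one) genuine split-merge steps comprising $\fR^\mu$ gives $G(\fR^\mu(\mu))-G(\mu)\ge\varepsilon_0>0$ for every $\mu\in\partial U^\mu$, and Lemma \ref{Lem:EntropyViability} yields viability.

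The step I expect to be the crux is the uniform lower bound on the jump over the \emph{non-compact} boundary $\partial U^\mu$: because components may tend to $0$, one cannot combine continuity with compactness, and the argument instead rests on the explicit estimate above, whose positivity is precisely the content of $\delta<\frac{n-1}{n+1}$. The bookkeeping of tied maximal companies across the $n$ compositions is the other point needing care, but it is governed by the same inequality, which guarantees both that the untouched ranks absorb all but one capped company per step and that the merged company never reaches the cap.
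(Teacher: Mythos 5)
Your proof is correct, and its existence half is essentially the paper's own argument: show every company produced by one application of $\check{\fR}^{\mu}$ is strictly below the cap (the binding case being the merged company), note that the number of companies sitting exactly at $1-\delta$ drops by one per application, and conclude the $n$-fold composition lands in $U^{\mu}$. Where you genuinely depart from the paper is the viability step. The paper also reaches for entropy $S(\mu)=-\sum_i\mu_i\log\mu_i$, proving the per-step bound $S(\check{\fR}^{\mu}(\mu))-S(\mu)\ge\bigl[1-\delta\tfrac{n+1}{n-1}\bigr]\log 2>0$ via Jensen's inequality; then, facing exactly the obstruction you identify ($S\notin C_b^2(\Delta_+^n,\R)$ because its gradient blows up near $\partial\Delta_+^n$), it regularizes by passing to the shifted entropy $S^{(\varepsilon)}(\mu):=S(\mu+\varepsilon\1_{n})$ and uses uniform continuity of $S$ on bounded sets to choose $\varepsilon$ small enough that the infimum of the increments remains positive. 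Your quadratic $G(\mu)=-\sum_i\mu_i^2$ bypasses that regularization entirely: it is already $C_b^2$, the increment $\tfrac{a^2}{2}-2bc$ is exact, and AM--GM plus your averaging bound reduces positivity to $(1-\delta)(n+j)>2$, which is the same inequality driving the paper's estimate. Both arguments correctly confront the crux you flag, namely that $\partial U^{\mu}$ is not compact so the uniform lower bound must come from explicit estimates. Two trade-offs worth noting: the paper's entropy computation is tie-independent, since the two smallest weights always sum to at most $\tfrac{2\delta}{n-1}$ regardless of how many companies are tied at the cap (the $n-1$ weights below one top company sum to $\delta$); you could have used that same bound, $bc\le\bigl(\tfrac{\delta}{n-1}\bigr)^2$, to make your quadratic increment bound uniform without the finite minimum over the tie count $j$. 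Conversely, your route is more elementary and self-contained, trading the paper's shift-and-uniform-continuity detour for a small amount of bookkeeping.
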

\begin{proof}
See Section \ref{Sec:Proofs}.
\end{proof}

\subsection{\label{Sec:Arbitrage}Arbitrage}

We begin with the notions of arbitrage, relative arbitrage, and no
free lunch with vanishing risk (NFLVR). Then we recall the fundamental
theorem of asset pricing (FTAP) for locally bounded semimartingales
and discuss its implications for regulated market models.
\begin{definition}
\label{Def:Arbs}In the premodel an \emph{arbitrage }over $[0,T]$\emph{
}is an admissible trading strategy $\wt H$ such that \begin{align}
P[(\wt H\cdot\wt X)_{T}\ge0]=1\quad\mbox{and}\quad P[(\wt H\cdot\wt X)_{T}>0]>0.\label{Eq:ArbDef}\end{align}
A \emph{relative arbitrage }over\emph{ $[0,T]$ }with respect to portfolio
$\wt{\eta}$ is a portfolio $\wt{\pi}$ such that \begin{align}
P(\wt V_{T}^{1,\wt{\pi}}\ge\wt V_{T}^{1,\wt{\eta}})=1\quad\mbox{and}\quad P(\wt V_{T}^{1,\wt{\pi}}>\wt V_{T}^{1,\wt{\eta}})>0.\label{Eq:RelArbDef}\end{align}
The corresponding notions of \emph{strong arbitrage }and \emph{strong
relative arbitrage} are defined by making the first inequalities of
(\ref{Eq:ArbDef}) and (\ref{Eq:RelArbDef}) strict, respectively.
\end{definition}
The condition NFLVR is a strengthening of the no arbitrage condition,
roughly implying that not only are there no arbitrages, but no {}``approximate
arbitrages.'' See \citet{Book:DelbSchach:ArbBook:2006,Art:DelbSchach:FundThmMathFin:1994,Art:DelbSchach:FundThmMathFin:1998}
for a complete exposition.
\begin{definition}
\label{Def:NFLVR}For $T\in\R_{++}$ define \begin{align*}
\wt K & :=\left\{ (\wt H\cdot\wt X)_{T}\mid\wt H\mbox{ admissible}\right\} ,\end{align*}
which is a convex cone of random variables in $L^{0}(\Omega,\cF_{T},P)$,
and \begin{align*}
\wt C & :=\left\{ \wt g\in L^{\infty}(\cF_{T},P)\mid\wt g\le\wt f\mbox{ for some }\wt f\in\wt K\right\} .\end{align*}
The condition \emph{no free lunch with vanishing risk (NFLVR) }over\emph{
$[0,T]$ }with respect to $\wt X$ is\begin{align*}
\overline{\wt C}\cap L_{+}^{\infty}(\cF_{T},P) & =\{0\},\end{align*}
where $\overline{\wt C}$ denotes the closure of $\wt C$ with respect
to the norm topology of $L^{\infty}(\cF_{T},P)$.
\end{definition}
For the analogs of Definitions \ref{Def:Arbs} and \ref{Def:NFLVR}
in a viable regulated model, simply replace $\wt X$ with $\widehat{Y}$
and remove all other {}``$\;\wt{}\;$''.

The FTAP states that NFLVR for the integrator of the class of trading
strategies is equivalent to the existence of an ELMM for the integrator\emph{.}
In the premodel the integrator is $\tilde{X}$, while in the regulated
model it is $\widehat{Y}$. Since $\widehat{Y}$ obeys the SDE (\ref{Eq:YhatSDE}),
then from standard theory in order for $(\widehat{Y})_{0\le t\le T}$
to be a local martingale under an equivalent measure $Q$ given by
$\frac{dQ}{dP}=:Z_{T}\in \cF_{T}$, then there exists a strictly positive martingale $(Z_t)_{0\le t\le T}$ satisfying $Z_t=E[Z_{T}\mid\cF_{t}]$ and having the representation:\begin{align}
Z_{t} & :=\cE(-\theta(Y)\cdot W)_{t}=\exp\left\{ -\int_{0}^{t}\theta(Y_{s})^{\prime}dW_{s}-\frac{1}{2}\int_{0}^{t}\vert\theta(Y_{s})^{\prime}\vert^{2}ds\right\} ,\quad0\le t\le T,\label{Eq:SDEViableZExpMart}\end{align}
where the market price of risk $\theta(\cdot)$ solves the market
price of risk equation\begin{align}
\sigma(Y_{t})\theta(Y_{t})=b(Y_{t}),\quad \text{a.s., } \; 0\le t\le T.\label{Eq:MarkPriceRiskEq}\end{align}
Given an exponential local martingale of the form (\ref{Eq:SDEViableZExpMart}), satisfaction of the Novikov criterion,\begin{align}
E\biggl[\exp\biggl\{\frac{1}{2}\int_{0}^{T}\left|\theta_{s}\right|^{2}ds\biggr\}\biggr] & <\infty,\label{Eq:Novikov}\end{align}
is sufficient for implying the martingality of $(Z)_{0\le t\le T}$.

While the NFLVR condition is of theoretical interest, it is not necessarily
of practical relevance. If we put ourselves in the situation of having
to select from some set of candidate market models, some of which
satisfy NFLVR and others of which do not, it may be a hopeless task
to figure out whether financial data support or refute NFLVR. In fact,
example 4.7 \citet{Art:KaratzKard:Numeraire:2007} shows that two
general semimartingale models on the same stochastic basis may possess
the same triple of predictable characteristics, with one admitting
an arbitrage while the other does not. Even if we have reason to believe
that a model admitting arbitrage or relative arbitrage is an accurate
one, it may be the case that the arbitrage portfolios depend in a
delicate way on the parameters of the model, $b$ and $\sigma$ here.
In such a case any attempts to estimate these parameters from observed
data would likely be too imprecise to lead to an investment strategy
that could convincingly be called an approximation to an arbitrage.

In contrast to this, the condition of diversity is supported by world
market data and the existence of antitrust laws in developed markets.
The condition (\ref{Eq:UniformEllipticity}) of uniform ellipticity
of the covariance is not as readily apparent, but seems to be a reasonable
manifestation of the idea that there is always at least some baseline
level of volatility in markets. The significance of these two conditions
is that in unregulated market models together they imply the existence
of a long-only relative arbitrage portfolio that is functionally generated
from the market weights (see Section \ref{Subsect:Diversity,-Intrinsic-Volatility,}
herein for the precise conditions as well as \citet{Art:Fernholz:PortGenFunct:1999,Art:Fernholz&Karatzas:RelArbVolStab:2005,Art:Karatzas&Fernholz:SPTReview:2009}),
not requiring estimation of $b$ or $\sigma$. It is therefore of
great interest whether or not this implication carries over to regulated
markets. The following proposition will be useful in Section \ref{Sec:ExamplesRegMarkets}
for showing that this is not the case.
\begin{proposition}
\label{Prop:BddVolNoRelArbREG}If the regulated model is viable, $\widehat{Y}$
satisfies NFLVR over $[0,T]$, and $\sigma(\cdot)$ is bounded on
$U^{x}$, then any ELMM for $\widehat{Y}$ is an EMM, and no portfolio
is a relative arbitrage with respect to any other portfolio over $[0,T]$
in the regulated model.\end{proposition}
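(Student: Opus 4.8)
The plan is to fix an arbitrary ELMM $Q$ for $\widehat{Y}$ — one exists because NFLVR holds and the FTAP applies — and to show that under $Q$ every portfolio wealth process is a genuine martingale; both conclusions then drop out. By the discussion surrounding (\ref{Eq:SDEViableZExpMart})--(\ref{Eq:MarkPriceRiskEq}), on the Brownian filtration the density of $Q$ is $\cE(-\theta(Y)\cdot W)$ with $\theta$ solving the market price of risk equation (\ref{Eq:MarkPriceRiskEq}), so $W^{Q}_{t}:=W_{t}+\int_{0}^{t}\theta(Y_{s})\,ds$ is a $Q$-Brownian motion. Inserting the Girsanov drift into (\ref{Eq:PortExpRepReg}), I would first check that the $P$-drift of every portfolio cancels, leaving $V^{1,\pi}_{t}=\cE(\pi^{\prime}\sigma(Y)\cdot W^{Q})_{t}$. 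Since a portfolio is uniformly bounded and $\sigma(\cdot)$ is bounded on $U^{x}$, the integrand $\pi^{\prime}\sigma(Y)$ is bounded, so $\tfrac{1}{2}\int_{0}^{T}|\pi_{s}^{\prime}\sigma(Y_{s})|^{2}\,ds$ is dominated by a deterministic constant and Novikov's criterion (\ref{Eq:Novikov}) gives that $V^{1,\pi}$ is a true $Q$-martingale with $E_{Q}[V^{1,\pi}_{T}]=1$, for every portfolio $\pi$.

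The no-relative-arbitrage statement is then immediate. If some $\pi$ were a relative arbitrage with respect to $\eta$, then because $Q\sim P$ the inequalities (\ref{Eq:RelArbDef}) hold $Q$-a.s., whence $E_{Q}[V^{1,\pi}_{T}-V^{1,\eta}_{T}]>0$; but both wealth processes are true $Q$-martingales issued from $1$, so $E_{Q}[V^{1,\pi}_{T}]=E_{Q}[V^{1,\eta}_{T}]=1$, a contradiction. Note this step uses only the ELMM property, not the EMM property still to be established.

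It remains to upgrade $Q$ from an ELMM to an EMM, i.e. to show that $\widehat{Y}$ is a true $Q$-martingale, and this is where the main difficulty lies. Each component is, by (\ref{Eq:YhatSDE}), a continuous $Q$-local martingale with $d\widehat{Y}_{i,t}=Y_{i,t}\,\sigma_{i}(Y_{t})\,dW^{Q}_{t}$ (continuity of $\widehat{Y}$ uses viability), so it suffices to verify $E_{Q}[\langle\widehat{Y}_{i}\rangle_{T}]<\infty$, where $\langle\widehat{Y}_{i}\rangle_{T}=\int_{0}^{T}Y_{i,t}^{2}|\sigma_{i}(Y_{t})|^{2}\,dt$. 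The obstruction is that $Y_{i}$, and with it the total capitalization, is unbounded, so boundedness of $\sigma$ does not by itself control this quadratic variation. The device I would use is to dominate $Y_{i,t}$ by the total market capitalization $M_{t}:=\sum_{j}Y_{j,t}$; since total capital and portfolio wealth are conserved at every regulation, $M$ is continuous and equals $M_{0}V^{1,\mu}_{t}$. Then $E_{Q}[\langle\widehat{Y}_{i}\rangle_{T}]\le(\sup_{U^{x}}|\sigma_{i}|^{2})\,M_{0}^{2}\int_{0}^{T}E_{Q}[(V^{1,\mu}_{t})^{2}]\,dt$, and writing $(V^{1,\mu}_{t})^{2}=\cE(2\mu^{\prime}\sigma(Y)\cdot W^{Q})_{t}\exp\{\int_{0}^{t}|\mu_{s}^{\prime}\sigma(Y_{s})|^{2}\,ds\}$, the exponential factor is bounded by a deterministic constant, so $E_{Q}[(V^{1,\mu}_{t})^{2}]\le e^{C't}$ by the martingale property of $\cE(2\mu^{\prime}\sigma(Y)\cdot W^{Q})$. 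Hence $E_{Q}[\langle\widehat{Y}_{i}\rangle_{T}]<\infty$, each $\widehat{Y}_{i}$ is a square-integrable $Q$-martingale, and $\widehat{Y}$ is a $Q$-martingale, so $Q$ is an EMM.
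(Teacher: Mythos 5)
Your proposal is correct, and for the no-relative-arbitrage half it follows essentially the paper's own argument: write the density of an arbitrary ELMM as $\cE(-\theta(Y)\cdot W)$ with $\theta$ solving (\ref{Eq:MarkPriceRiskEq}), pass to the $Q$-Brownian motion $W^{(Q)}$ via Girsanov, note that every portfolio wealth process becomes the stochastic exponential $\cE\bigl(\pi^{\prime}\sigma(Y)\cdot W^{(Q)}\bigr)$ with integrand bounded (portfolios are uniformly bounded, $\sigma(\cdot)$ is bounded on $U^{x}$, and $Y\in U^{x}$ a.e.), invoke Novikov to get true $Q$-martingality, and derive the contradiction $E^{Q}[V_{T}^{1,\pi}-V_{T}^{1,\eta}]=0$. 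Where you genuinely depart from the paper is on the EMM claim. The paper's printed proof stops at the martingality of the wealth processes; it never separately treats $\widehat{Y}$, which is \emph{not} an exponential local martingale: its $Q$-dynamics $d\widehat{Y}_{i,t}=Y_{i,t}\,\sigma_{i}(Y_{t})\,dW_{t}^{(Q)}$ carry the coefficient $Y_{i,t}$, which is unbounded and differs from $\widehat{Y}_{i,t}$ by the accumulated regulatory jumps, so neither the exponential representation nor Novikov with a bounded integrand applies to it directly; nor can one deduce componentwise martingality from the martingality of the market-portfolio wealth $\sum_{i}\widehat{Y}_{i}$ alone, since the components $\widehat{Y}_{i}$ need not stay nonnegative. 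Your extra step closes exactly this gap: dominate $Y_{i,t}$ by the total capitalization $M_{t}=M_{0}V_{t}^{1,\mu}$ (continuous and proportional to a portfolio wealth process precisely because regulation conserves total capital), bound $E_{Q}[(V_{t}^{1,\mu})^{2}]\le e^{Ct}$ via the factorization of the squared stochastic exponential, and conclude $E_{Q}[\langle\widehat{Y}_{i}\rangle_{T}]<\infty$, so each $\widehat{Y}_{i}$ is an $L^{2}$-bounded $Q$-martingale. So your proof is, if anything, more complete than the published one on the EMM half; what the paper's terser route buys is brevity, at the cost of leaving that step implicit.
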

\begin{proof}
To prove the martingality, let measure $Q$ be any ELMM for $\widehat{Y}$, and therefore have the form $\frac{dQ}{dP}=Z_{T}=\cE(-\theta(Y) \cdot W)_{T}$, where $\theta (Y)$ solves the market price of risk equation (\ref{Eq:MarkPriceRiskEq}).
Then Equation (\ref{Eq:YhatSDE}) implies that\begin{align*}d\widehat{Y}_{i,t} & =Y_{i,t}(\sum_{\nu=1}^d\sigma_{i\nu} (Y_t) dW_{\nu,t}^{(Q)}), \quad 1\le i\le n, \quad0\le t\le T,\end{align*}
where\begin{align*}
W_{t}^{(Q)} & :=W_{t}+\int_{0}^{t}\theta (Y_s) ds,\quad0\le t\le T\end{align*}
is a $Q$-Brownian motion by Girsanov's theorem. This implies
that \begin{align*}
dV_{t}^{w,\pi} & =V_{t}^{w,\pi}\pi_{t}^{\prime}\sigma (Y_t)dW_t^{(Q)},\quad0\le t\le T.\end{align*}
Therefore, $(V^{w,\pi})_{0\le t\le T}$ is an exponential $Q$-local martingale.
Since $Y\in U^{x}$, $dt\times dP$-a.e., this implies that $\sigma(Y)$
is bounded, $dt\times dP$-a.e. The portfolio $\pi$ is uniformly
bounded by definition, so $(V_{t}^{w,\pi})_{0\le t\le T}$ is a $Q$-martingale
by the Novikov criterion (\ref{Eq:Novikov}).

Now suppose that $\pi$ is a relative arbitrage with respect to $\eta$.
Then by $Q\sim P$ it follows that \begin{align*}
Q(V_{T}^{w,\pi}\ge V_{T}^{w,\eta})=1\quad\mbox{and}\quad Q(V_{T}^{w,\pi}>V_{T}^{w,\eta})>0.\end{align*}
However $(V_{t}^{w,\pi})_{0\le t\le T}$ and $(V_{t}^{w,\eta})_{0\le t\le T}$
are both $Q$-martingales, so their difference is also a $Q$-martingale,
with $E^{Q}[V_{T}^{w,\pi}-V_{T}^{w,\eta}]=w-w=0$. This contradicts
the relative arbitrage property above, so this market admits no pair
of relative arbitrage portfolios.\qed\end{proof}
Some recent investigations pertaining to relative arbitrage include
\citet{Art:FernKaratzKard:DiversityAndRelArb:2005}, \citet{Art:Fernholz&Karatzas:RelArbVolStab:2005},
\citet{Art:DFernholz&Banner:ShortTermRelArbVolStab:2008}, \citet{Art:Ruf:OptTradStratUnderArb:2009:PREPRINT},
and \citet{Art:Mijatovic&Urusov:DetermCritAbsOfArbInDiffModels:2009:PREPRINT},
to name a few. An arbitrage is essentially a relative arbitrage with
respect to the money market account, modulo the uniform boundedness
requirement of portfolios and their prohibition from investing in
the money market, both of which can be relaxed as in \citet{Art:DFernKaratz:OnOptimalArbitrage:2010}.
The existence of a relative arbitrage does not imply the existence
of an arbitrage, as illustrated by examples, often called ``bubble
markets'' (see \citet{Art:CoxHobson:LocalMartsBubblesOptPrices:2005,Art:PalProtterBubblesHTransform:2010,Art:JarrowProtterShimbo:BubblesComplete:2007}),
where there exists an equivalent measure under which the stock process
is a strict local martingale. In particular, if $\wt \pi$ is a relative
arbitrage with respect to $\wt \eta$, then the trading strategy $\wt{H}:=\wt{H}^{1,\wt \pi}-\wt{H}^{1,\wt \eta}$
need not satisfy the requirement that $\wt{H} \cdot\wt X$ be uniformly
bounded from below, so $\wt H$ need not be admissible.

\subsection{\label{Subsect:Diversity,-Intrinsic-Volatility,}Diversity, Intrinsic
Volatility, and Relative Arbitrage}

The works by Robert Fernholz et al. (\citet{Art:Fernholz:OnDivEqMark:1999,Book:Fernholz:SPT:2002,Art:Fernholz&Karatzas:RelArbVolStab:2005})
on diversity and arbitrage prove that for unregulated markets, over
an arbitrary time horizon, there exist strong relative arbitrage portfolios
with respect to the market portfolio in any weakly diverse market
satisfying certain assumptions and regularity conditions. Furthermore,
they show how such relative arbitrages can be constructed as long-only
portfolios which are functionally generated from $\wt{\mu}$, not
requiring knowledge of $\wt b$ or $\wt{\sigma}$. A sufficient set
of assumptions and regularity are given by the following.
\begin{assumption}
\label{Ass:SuffCondRelArb}
\begin{enumerate}
\item \label{Item:Ass:ItoProc}The capitalizations are modeled by an It\^o
process\begin{align*}
d\wt X_{i,t} & =\wt X_{i,t}\left(\wt{b}_{i,t} dt+ \sum_{\nu=1}^d \wt{\sigma}_{i\nu,t}dW_{\nu,t}\right),\quad 1\le i\le n,\\
\wt{X}_{0} & =x_{0}\in\R_{++}^{n},\end{align*}
where $\wt b$ and $\wt{\sigma}$ are progressively measurable processes
satisfying $\forall T\in\R_{++}$, \[
\sum_{i=1}^{n}\left(\int_{0}^{T}|\wt b_{i,t}|dt+\sum_{\nu=1}^{d}\int_{0}^{T}\bigl|\wt{\sigma}_{i\nu,t}\bigr|^{2}dt\right)<\infty,\quad\mbox{a.s.}\]

\item \label{Item:Ass:UEllptic}The capitalizations' covariance process
is uniformly elliptic:\begin{align}
\exists\varepsilon>0:\mbox{ a.s.}\quad & \varepsilon\left|\xi\right|^{2}\le\xi\wt{\sigma}_{t}\wt{\sigma}_{t}^{\prime}\xi,\quad\forall t\ge0,\;\forall\xi\in\R^{n}.\label{Eq:UniformEllipticity}\end{align}

\item \label{Item:Ass:NoDivs}Companies pay no dividends (and therefore
can't control their size by this means).
\item \label{Item:ConstantCompanies}The number of companies is a constant.
\item \label{Item:Ass:WeakDiv}The market is weakly diverse.
\item \label{Item:Ass:TradingRules}Trading may occur in continuous time,
in arbitrary quantities, is frictionless, and does not impact prices.
\end{enumerate}
\end{assumption}
These conditions have been generalized in \citet{Art:Fernholz&Karatzas:RelArbVolStab:2005}.
There it is shown that the uniform ellipticity assumption may be relaxed,
and the market need not be weakly diverse if it satisfies one of several
notions of {}``sufficient intrinsic volatility.'' One measure of
the intrinsic volatility in the market is the excess growth rate of
the market portfolio, \begin{align*}
\gamma_{\wt{\mu},t}^{*} & =\frac{1}{2}\left(\sum_{i=1}^{n}\wt{\mu}_{i,t}\wt a_{ii,t}-\wt{\mu}_{t}^{\prime}\wt a_{t}\wt{\mu}_{t}\right).\end{align*}
The following proposition provides an example of a {}``sufficient
intrinsic volatility'' type condition.
\begin{proposition}[adapted from Proposition 3.1 \citet{Art:Fernholz&Karatzas:RelArbVolStab:2005}]
\label{Prop:SuffIntrVol} Assume an unregulated market model satisfies
items \ref{Item:Ass:ItoProc}, \ref{Item:Ass:NoDivs}, \ref{Item:ConstantCompanies},
and \ref{Item:Ass:TradingRules} of Assumption \ref{Ass:SuffCondRelArb}.
Additionally suppose there exists a continuous, strictly increasing
function $\wt{\Gamma}:[0,\infty)\to[0,\infty)$ with $\wt{\Gamma}(0)=0$,
$\wt{\Gamma}(\infty)=\infty$, and satisfying a.s. \begin{align}
\wt{\Gamma}(t) & \le\int_{0}^{t}\wt{\gamma}_{\wt{\mu},s}^{*}ds<\infty,\qquad\mbox{for all}\quad0\le t<\infty.\label{Eq:SuffIntrVol}\end{align}
Then there exists a functionally generated, long-only portfolio that
is a strong relative arbitrage with respect to the market portfolio
over sufficiently long horizon.\end{proposition}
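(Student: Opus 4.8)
The plan is to exhibit an explicit functionally generated, long-only portfolio and to control its value relative to the market by means of Fernholz's master equation for functionally generated portfolios (see \citet{Art:Fernholz:PortGenFunct:1999,Book:Fernholz:SPT:2002,Art:Karatzas&Fernholz:SPTReview:2009}). Recall that a positive $C^{2}$ function $G$ on $\Delta_{+}^{n}$ generates a long-only portfolio $\wt{\pi}^{G}$ whose value relative to the market obeys
\begin{align*}
\log\frac{\wt V_{T}^{1,\wt{\pi}^{G}}}{\wt V_{T}^{1,\wt{\mu}}} & =\log\frac{G(\wt{\mu}_{T})}{G(\wt{\mu}_{0})}+\int_{0}^{T}\wt{\mathfrak{g}}_{t}\,dt,\\
\wt{\mathfrak{g}}_{t} & :=\frac{-1}{2\,G(\wt{\mu}_{t})}\sum_{i,j=1}^{n}D_{ij}^{2}G(\wt{\mu}_{t})\,\wt{\mu}_{i,t}\wt{\mu}_{j,t}\,\wt{\tau}_{ij,t},
\end{align*}
with $D_{ij}^{2}:=\partial^{2}/\partial x_{i}\partial x_{j}$ and $\wt{\tau}_{ij}$ the covariance of the stocks relative to the market. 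Since the hypothesis (\ref{Eq:SuffIntrVol}) is phrased entirely through the market's excess growth rate $\wt{\gamma}_{\wt{\mu}}^{*}$, the natural candidate is the Shannon entropy $S(x):=-\sum_{i=1}^{n}x_{i}\log x_{i}$: a direct computation (using $D_{ij}^{2}S=-\delta_{ij}/x_{i}$ and the identity $\sum_{i}\wt{\mu}_{i}\wt{\tau}_{ii}=2\wt{\gamma}_{\wt{\mu}}^{*}$) yields drift $\wt{\mathfrak{g}}_{t}=\wt{\gamma}_{\wt{\mu},t}^{*}/S(\wt{\mu}_{t})$.

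Carrying this out for $G=S$, and using $0<S\le\log n$ on $\Delta_{+}^{n}$ together with $\wt{\gamma}_{\wt{\mu}}^{*}\ge0$, I would bound the drift integral from below by
\begin{align*}
\int_{0}^{T}\frac{\wt{\gamma}_{\wt{\mu},t}^{*}}{S(\wt{\mu}_{t})}\,dt & \ge\frac{1}{\log n}\int_{0}^{T}\wt{\gamma}_{\wt{\mu},t}^{*}\,dt\ge\frac{\wt{\Gamma}(T)}{\log n},
\end{align*}
which tends to $+\infty$ as $T\to\infty$ because $\wt{\Gamma}(\infty)=\infty$. This is precisely where the intrinsic-volatility hypothesis enters.

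The main obstacle is the boundary term $\log\bigl(S(\wt{\mu}_{T})/S(\wt{\mu}_{0})\bigr)$. For plain entropy this is not controlled: along paths for which $\wt{\mu}_{T}$ approaches a vertex of the simplex, $S(\wt{\mu}_{T})\downarrow0$ and the term tends to $-\infty$, so it can overwhelm the finite, $T$-dependent drift integral. Since weak diversity has been dropped from the assumptions, nothing forces $\wt{\mu}_{T}$ to stay away from the vertices, and hence the entropy portfolio alone will not deliver the almost sure strict inequality required for \emph{strong} relative arbitrage. I would resolve this by replacing $S$ with the shifted entropy $G_{c}:=c+S$ for a fixed constant $c>0$, which is bounded and bounded away from zero, $c\le G_{c}\le c+\log n$. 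One checks that $G_{c}$ generates the long-only portfolio $\wt{\pi}_{i}^{G_{c}}=\wt{\mu}_{i}\,(c-\log\wt{\mu}_{i})/(c+S(\wt{\mu}))>0$, which sums to one and is uniformly bounded, hence is admissible in the sense of Definition \ref{Def:Portfolio}; and since $D_{ij}^{2}G_{c}=D_{ij}^{2}S$, its drift is $\wt{\mathfrak{g}}_{t}=\wt{\gamma}_{\wt{\mu},t}^{*}/(c+S(\wt{\mu}_{t}))\ge\wt{\gamma}_{\wt{\mu},t}^{*}/(c+\log n)$, still tied directly to the hypothesis.

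Assembling the master equation for $G_{c}$, and now using $S\ge0$ in the terminal value and $S\le\log n$ in the initial value, yields the a.s.\ bound
\begin{align*}
\log\frac{\wt V_{T}^{1,\wt{\pi}^{G_{c}}}}{\wt V_{T}^{1,\wt{\mu}}} & \ge\log\frac{c}{c+\log n}+\frac{\wt{\Gamma}(T)}{c+\log n},
\end{align*}
in which the boundary contribution is now a fixed negative constant rather than an unbounded quantity. Because $\wt{\Gamma}$ is strictly increasing with $\wt{\Gamma}(\infty)=\infty$, choosing any horizon $T$ with $\wt{\Gamma}(T)>(c+\log n)\log\bigl((c+\log n)/c\bigr)$ makes the right-hand side strictly positive, so $\wt V_{T}^{1,\wt{\pi}^{G_{c}}}>\wt V_{T}^{1,\wt{\mu}}$ a.s., i.e.\ $\wt{\pi}^{G_{c}}$ is a strong relative arbitrage with respect to the market over $[0,T]$. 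The only remaining technical point is the validity of the master equation up to $T$: since a.s.\ $\wt{\mu}_{t}\in\Delta_{+}^{n}$ for all $t$ and $G_{c}$ is smooth on the open simplex, It\^o's formula applies pathwise and the drift integral is a.s.\ finite (being dominated by $c^{-1}\int_{0}^{T}\wt{\gamma}_{\wt{\mu},t}^{*}\,dt<\infty$ under (\ref{Eq:SuffIntrVol})), so no boundary regularity of $S$ is actually needed.
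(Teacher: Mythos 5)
Your proof is correct, and it is essentially the same argument as the paper's: the paper gives no proof of its own but simply defers to Proposition 3.1 of Fernholz and Karatzas (2005), whose proof is exactly your construction --- the shifted entropy $G_{c}=c+S$, its generated long-only portfolio $\wt{\pi}_{i}^{G_{c}}=\wt{\mu}_{i}(c-\log\wt{\mu}_{i})/(c+S(\wt{\mu}))$, the master equation, and the lower bound $\log\bigl(c/(c+\log n)\bigr)+\wt{\Gamma}(T)/(c+\log n)$, which is positive once $\wt{\Gamma}(T)>(c+\log n)\log\bigl((c+\log n)/c\bigr)$. Your diagnosis of why plain entropy fails (the uncontrolled terminal term $\log S(\wt{\mu}_{T})$ absent diversity) and the verification that the shift fixes it match the cited argument, so the reconstruction is faithful and complete.
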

\begin{proof}
See \citet{Art:Fernholz&Karatzas:RelArbVolStab:2005}.
\end{proof}
A diverse regulated market is simply a regulated market in which $\mu$
in place of $\wt{\mu}$ satisfies Definition \ref{Def:Diversity}.
By Lemma 3.4 of \citet{Art:Karatzas&Fernholz:SPTReview:2009} (the
proof of which is merely algebraic and has nothing to do with whether
the market model is regulated or not) in a uniformly elliptic, diverse
market (respectively, regulated market), $\wt{\gamma}_{\wt{\mu}}^{*}$
($\gamma_{\mu}^{*}$) satisfies\begin{align}
\frac{\varepsilon\delta}{2} & \le\wt{\gamma}_{\wt{\mu},t}^{*},\quad\forall t\ge0\qquad\left(\frac{\varepsilon\delta}{2}\le\gamma_{\mu,t}^{*},\quad\forall t\ge0\right).\label{Eq:ExcGrowthBB}\end{align}
In this equation $\varepsilon$ satisfies $\varepsilon\left|\xi\right|^{2}\le\xi\wt{\sigma}_{t}\wt{\sigma}_{t}^{\prime}\xi$
($\varepsilon\left|\xi\right|^{2}\le\xi\sigma_{t}\sigma_{t}^{\prime}\xi$),
$\forall t\ge0,\;\forall\xi\in\R^{n}$, and $\delta$ satisfies $\mu_{(1)}\le1-\delta$
($\wt{\mu}_{(1)}\le1-\delta$). This implies that in any uniformly
elliptic, diverse market (regulated market), that (\ref{Eq:SuffIntrVol})
(its regulated market counterpart) is satisfied by $\wt{\Gamma}(t)=\frac{\varepsilon\delta}{2}t$
($\Gamma(t)=\frac{\varepsilon\delta}{2}t$). In the examples of Section
\ref{Sec:ExamplesRegMarkets}, NFLVR and no relative arbitrage hold
for the regulated markets, while diversity and uniform ellipticity
also hold, implying that (\ref{Eq:SuffIntrVol}) is satisfied in these
cases. \emph{Therefore, in contrast to the premodel, the conditions of weak
diversity and uniform ellipticity together, and thus also the weaker condition of sufficient
intrinsic volatility, are not sufficient for the existence of relative
arbitrage in the regulated model.
}

\section{\label{Sec:ExamplesRegMarkets}Examples of Regulated Markets}

In this section we apply the split-merge regulation of subsection
\ref{SubSec:Split-MergeReg} to geometric Brownian motion (GBM) and
a log-pole market as premodels. In both cases the regulated market
is diverse and uniformly elliptic, and therefore satisfies the regulated
market analog of the sufficient intrinsic volatility condition (\ref{Eq:SuffIntrVol}).
In both cases the regulated market satisfies NFLVR and admits no pair
of relative arbitrage portfolios.

\subsection{Geometric Brownian Motion}

Consider the case where the unregulated capitalization process is
a GBM,\begin{align*}
d\wt X_{i,t} & =\wt X_{i,t} \left[b_i dt+ \sum_{\nu=1}^n \sigma_{i\nu} dW_{\nu, t}\right], \quad 1\le i\le n\\
\wt X_{0} & =x_{0}\in O^{x}=\R_{++}^{n},\end{align*}
 for some $n\ge3$, $b\in\R^{n}$, and $\sigma\in\R^{n\times n}$
of rank $n$. GBM satisfies NFLVR on all $[0,T]$, $T\in\R_{++}$
and has constant volatility, so it is not weakly diverse on any $[0,T]$
and admits no pair of relative arbitrage portfolios (see Section 6
of \citet{Art:Karatzas&Fernholz:SPTReview:2009}). Select $\delta\in(0,\frac{n-1}{n+1})$
and define the regulatory set\begin{align*}
U^{\mu} & :=\{\mu\in\Delta_{+}^{n}\mid\mu_{(1)}<1-\delta\}.\end{align*}
By Lemma \ref{Lem:baDeltaEntropyPos} the associated split-merge rule
exists and is viable. Since $\theta:=\sigma^{-1}b$ is a constant,
the Novikov criterion (\ref{Eq:Novikov}) for $Z:=\cE(-\theta\cdot W)$
is satisfied, and $Z$ is therefore a martingale. This implies that
for any $T\in\R_{++}$, $Q$ specified by $\frac{dQ}{dP}:=Z_{T}$
is an ELMM for $(\widehat{Y}_{t})_{0\le t\le T}$. Furthermore, $(\widehat{Y}_{t})_{0\le t\le T}$
is a $Q$-martingale, and the regulated market is free of relative
arbitrage by Proposition \ref{Prop:BddVolNoRelArbREG}. The regulated
market is diverse since $P(\mu_{t}\in\bar{U}^{\mu},\;\forall t\ge0)$$=P(\mu_{(1),t}\le1-\delta,\;\forall t\ge0)=1$,
which implies that (\ref{Eq:ExcGrowthBB}) and thus (\ref{Eq:SuffIntrVol})
are satisfied. Therefore in this regulated market, the notions of
sufficient intrinsic volatility and diversity coexist with NFLVR and
no relative arbitrage.

\subsection{Log-Pole Market\label{SubSec:Log-Pole}}

So-called {}``log-pole'' market models provide examples of diverse,
unregulated markets. Diversity is maintained in these markets by means
of a log-pole-type singularity in the drift of the largest capitalization,
diverging to $-\infty$ as the largest weight $\mu_{(1)}$ approaches
the diversity cap $1-\delta$. Explicit portfolios which are relative
arbitrages with respect to the market portfolio over any prespecified
time horizon may be formed by down-weighting the largest company in
a controlled manner (see e.g. \citet{Art:FernKaratzKard:DiversityAndRelArb:2005,Art:Karatzas&Fernholz:SPTReview:2009}).
This model can be interpreted as a continuous approximation of an
economy in which the relative size of the largest company is controlled
via a regulator imposing fines on it. When regulatory breakup is applied,
keeping the largest weight $\mu_{(1)}$ away from $1-\delta$, then
the arbitrage opportunities vanish.

Following Section 9 of \citet{Art:Karatzas&Fernholz:SPTReview:2009}
(see \citet{Art:FernKaratzKard:DiversityAndRelArb:2005} for more
details and generality) fix $n\ge3$, $\delta\in(0,\frac{1}{2})$
and consider the unregulated capitalization process $\wt X$, the
pathwise unique strong solution to \begin{align*}
d\wt X_{i,t} & =\wt X_{i,t} \left(b_i (\wt X_{t})dt+\sum_{\nu=1}^n\sigma_{i\nu} W_{\nu,t}\right), \quad 1\le i\le n,\\
\wt X_{0} & =x_{0}\in O^{x}:=\{x_{0}\in\R_{++}^{n}\mid\mu_{(1)}(x_{0})<1-\delta\},\end{align*}
where $\sigma\in\R^{n\times n}$ is rank $n$. The function $b(\cdot)$
is given by \begin{align*}
b_{i}(x) & :=\frac{1}{2}a_{ii}+g_{i}1_{\cQ_{i}^{c}}(x)-\frac{c}{\delta}\frac{1_{\cQ_{i}}(x)}{\log\left(\left(1-\delta\right)/\mu_{i}(x)\right)},\qquad1\le i\le n,\end{align*}
where $\{g_{i}\}_{1}^{n}$ are non-negative numbers, $c$ is a positive
number, and \begin{align*}
\cQ_{1} & :=\left\{ x\in\R_{++}^{n}\mid x_{1}\ge\max_{2\le j\le n}x_{j}\right\} ,\qquad\cQ_{n}:=\left\{ x\in\R_{++}^{n}\mid x_{n}>\max_{1\le j\le m-1}x_{j}\right\} ,\\
\cQ_{i} & :=\left\{ x\in\R_{++}^{n}\mid x_{i}>\max_{1\le j\le i-1}x_{j},\quad x_{i}\ge\max_{i+1,\le j\le n}x_{j}\right\} ,\quad\mbox{for }i=2,\ldots,n-1.\end{align*}
When $x\in\cQ_{i}$, then $x_{i}$ is the largest of the $\{x_{j}\}_{1}^{n}$
with ties going to the smaller index. In this model each company behaves
like a geometric Brownian motion when it is not the largest. The largest
company is repulsed away from the log-pole-type singularity in its
drift at $1-\delta$. Strong existence and pathwise uniqueness for
this SDE are guaranteed for any $x_{0}$ in $O^{x}$ by \citet{Art:Veretennikov:1981}
(see also \citet{Art:FernKaratzKard:DiversityAndRelArb:2005}). The
capitalizations satisfy $P(\wt X_{t}\in O^{x},\; \forall t\ge0)=1$,
so this premodel is diverse. The function $b(\cdot)$ is locally bounded
since the coefficients of $1_{\cQ_{i}^{c}}(x)$ and $1_{\cQ_{i}}(x)$
are continuous on $O^{x}$ and the singularity at $\mu_{(1)}(x)=1-\delta$
is away from the boundary of each $\cQ_{i}$ for $\delta\in(0,\frac{1}{2})$.
Since the market is diverse and has constant volatility, then by
the results of \citet{Book:Fernholz:SPT:2002,Art:Karatzas&Fernholz:SPTReview:2009}
over arbitrary horizon the market admits long-only relative arbitrage
portfolios which are functionally generated from the market portfolio.
Furthermore since $\sigma$ is a constant, $\wt X$ has no ELMM so
admits a FLVR.

This model may be regulated in such a way to remove these relative
arbitrage opportunities and satisfy NFLVR.  Picking $\delta^{\prime}\in(\delta,\frac{n-1}{n+1})$
and $x_{0}\in U^{x}$, define the regulatory set to be \begin{align*}
U^{\mu} & :=\{\mu\in\Delta_{+}^{n}\mid\mu_{(1)}<1-\delta^{\prime}\}\subseteq O^{\mu}.\end{align*}
The associated split-merge regulation rule exists and is viable by
Lemma \ref{Lem:baDeltaEntropyPos}. The function $b\restriction_{\bar{U}^{x}}(\cdot)$
is bounded, so taking $\theta(\cdot):=\sigma^{-1}b(\cdot)$, then
$\theta(Y)$ is a.s. bounded uniformly in time. This implies
that the Novikov criterion (\ref{Eq:Novikov}) is satisfied for $Z:=\cE(-\theta (Y) \cdot X)$,
and so $Z$ is a martingale. For any $T\in\R_{++}$, $Q$ specified
by $\frac{dQ}{dP}:=Z_{T}$ is an ELMM for $(\widehat{Y}_{t})_{0\le t\le T}$.
Furthermore $(\widehat{Y}_{t})_{0\le t\le T}$ is a $Q$-martingale,
and the regulated market is free of relative arbitrage by Proposition
\ref{Prop:BddVolNoRelArbREG}. The diversity of the regulated market
implies that (\ref{Eq:ExcGrowthBB}), and thus (\ref{Eq:SuffIntrVol})
are satisfied. Therefore in this regulated market, the notions of
sufficient intrinsic volatility and diversity coexist with NFLVR and
no relative arbitrage.

The pathology of this premodel is that the largest company's drift
approaches $-\infty$ as $\mu_{(1)}$ approaches $1-\delta$. The
cure is to prevent the largest company from approaching $1-\delta$
by regulation and thus bound the worst expected rate of return. The
pathological region of $\Delta_{+}^{n}$ is removed from $\mu$'s
state space by the regulation procedure, and the result is an arbitrage-free
market.

\section{\label{Sec:Conclusions}Conclusions}

Models in which diversity is maintained by a drift-type condition,
whereby the rate of expected return of the largest company must become
unboundedly negative compared to the rate of expected return of some
other company in the economy, cover only one particular mechanism
by which diversity may be achieved. These are reasonable models for
markets in which diversity is maintained by some combination of fines
on big companies imposed by antitrust regulators, and/or the biggest
company consistently delivering less return than the other companies
for other reasons. In such markets there is an intuitive undesirability
in holding the stock of the largest company, since its upside potential
is limited relative to that of the other companies. Fernholz showed
that this is not merely a vague undesirability, but that any passive
portfolio holding shares of the biggest company can be strictly outperformed
by functionally generated portfolios which are relative arbitrages
with respect to the former.

If regulators maintain diversity within an equity market by utilizing
regulatory breakup, then the situation is quite different. This mechanism
need not open the door to arbitrage. It entails no systematic debasement
of the total capital in the economy, and, for many models, can be shown
to be arbitrage-free, admitting an equivalent martingale measure.

The current situation in U.S. markets is that regulatory breakups
are uncommonly used, and primarily in cases reversing provisionally
approved mergers. This suggests that the previous conclusion of Fernholz,
Karatzas et al. in \citet{Art:Fernholz&Karatzas:RelArbVolStab:2005}
that in the past conditions in U.S. markets have likely been compatible
with functionally generated relative arbitrage with respect to the
market portfolio, is not threatened by this result. If, however, regulatory
breakup were to become a primary tool of antitrust regulators, then,
modulo our assumption of portfolio wealth conservation, the argument
for existence of functionally generated relative arbitrage in diverse
markets would be substantially weakened.

The notions of diversity combined with uniform ellipticity, and the more general {}``sufficient intrinsic volatility
of the market'' are useful conditions in that they can be tested by empirical observations. This is in contrast to the rather
abstract and normative condition of existence of an equivalent martingale
measure, for which it may be hopeless to make a case for or against
via observed data alone. That these conditions do not imply relative
arbitrage in regulated market models prompts the question of whether
a general, empirically verifiable condition can be found that implies
relative arbitrage for both regulated and unregulated market models.

\section{\label{Sec:Proofs}Proofs}
\begin{proof}[Proof of Lemma \ref{Lem:EntropyViability}]
For $\mu_{t}:=\mu(Y_{t})$, let $G_{t}:=G(\mu_{t}),\;\forall(\omega,t)\in[0,\tau_{\infty})$.
By Definition \ref{Def:YHat} of $\widehat{Y}$ and \ref{Eq:YhatSDE},
we can decompose $G_{t\wedge\tau_{k}}$ as \begin{align}
G_{t\wedge\tau_{k}} & =G_{0}+\sum_{m=1}^{k}\int_{t\wedge\tau_{m-1}^{+}}^{t\wedge\tau_{m}}dG_{t}+\sum_{m=1}^{N_{t}\wedge(k-1)}\left[G(\fR^{\mu}(\mu_{\tau_{m}}))-G_{\tau_{m}}\right].\label{Eq:Proof:DeltaG:GDecomp1}\end{align}
On $(\tau_{k-1},\tau_{k}]$ by It\^o's formula, the process $\mu$
obeys\begin{align*}
d\mu_{i,t} & =\mu_{i,t} \Big[\Big((b_i(X_t)-\sum_{j=1}^n a_{ij}(X_t) \mu_{j,t}-[\mu_{t}^{\prime}b(X_t)-\mu_{t}^{\prime}a(X_t)\mu_{t}]\Big)dt\\
 & \quad +\sum_{\nu=1}^d\left(\sigma_{i\nu}(X_t)-[\mu_{t}^{\prime}\sigma (X_t)]_\nu\right)dW_{\nu,t}\Big],\\
  & =B_{i,t}dt+\sum_{\nu=1}^dR_{i\nu,t}dW_{\nu,t},\quad 1\le i\le n.\end{align*}
The processes
$B$ and $R$ are bounded on $(0,\tau_{\infty})$, since $b(\cdot)$
and $\sigma(\cdot)$ are uniformly bounded on $U^{x}$. Defining $\widehat{G}_{t}:=G_{t}-\sum_{m=1}^{N_{t}}\left[G(\fR^{\mu}(\mu_{\tau_{m}}))-G_{\tau_{m}}\right]$,
$\forall(t,\omega)\in[0,\tau_{\infty})$, then by It\^o's formula $\widehat{G}$
is an It\^o process on $[0,\tau_{\infty})$, and so there exist processes
$C$ and $S$ taking values in $\R^{n}$ and $\R^{n\times d}$, respectively,
such that\begin{align*}
d\widehat{G}_{t} & =C_{t}dt+S_{t}dW_{t},\quad\mbox{on }(0,\tau_{\infty}).\end{align*}
The integrands $C$ and $S$ are uniformly bounded on $(0,\tau_{\infty})$
since the first and second derivatives of $G(\cdot)$ are by assumption
bounded on $\Delta_{+}^{n}$, and $B$, $R$ above are uniformly bounded
on $(0,\tau_{\infty})$. This implies that $\int_{0}^{t\wedge\tau_{\infty}}C_{s}ds$
and $\int_{0}^{t\wedge\tau_{\infty}}S_{s}dW_{s}$ are well-defined
for all $t>0$ by the theories of Lebesgue and stochastic integration.
Therefore $\lim_{k\to\infty}(\1_{\{\tau_{\infty}<\infty\}}\widehat{G}_{\tau_{k}})\in\R$
a.s.

By (\ref{Eq:Proof:DeltaG:GDecomp1}) and the definition of $\widehat{G}$,
we have: \begin{align}
G_{\tau_{k}} & =\widehat{G}_{\tau_{k}}+\sum_{m=1}^{k-1}\left[G(\fR^{\mu}(\mu_{\tau_{m}}))-G_{\tau_{m}}\right].\label{Eq:Proof:DeltaG:GDecomp2}\end{align}
On $\{\tau_{\infty}<\infty\}$ by assumption either \begin{align*}
\sum_{m=1}^{k-1}\left[G(\fR^{\mu}(\mu_{\tau_{m}}))-G_{\tau_{m}}\right] & \underset{k\to\infty}{\longrightarrow}\infty,\quad\mbox{a.s.},\end{align*}
or \begin{align*}
\sum_{m=1}^{k-1}\left[G(\fR^{\mu}(\mu_{\tau_{m}}))-G_{\tau_{m}}\right] & \underset{k\to\infty}{\longrightarrow}-\infty,\quad\mbox{a.s.}\end{align*}
But in (\ref{Eq:Proof:DeltaG:GDecomp2}) $\{\widehat{G}_{\tau_{k}}\}_{1}^{\infty}$
converges in $\R$ a.s. on $\{\tau_{\infty}<\infty\}$, and $G(\cdot)$
is a bounded function by assumption, so (\ref{Eq:Proof:DeltaG:GDecomp2})
implies that $P(\tau_{\infty}<\infty)=0$.
\qed\end{proof}

\begin{proof}[Proof of Lemma \ref{Lem:baDeltaEntropyPos}]
Fix $n\ge3$ and $\delta\in(0,\frac{n-1}{n+1})$. The boundary of
$U^{\mu}$ in $\Delta_{+}^{n}$ is \begin{align*}
\partial U^{\mu} & =\{\mu\in\Delta_{+}^{n}\mid\mu_{(1)}=1-\delta\}.\end{align*}
The set $U^{\mu}$ is non-empty and open, and by assumption satisfies
$U^{\mu}\subseteq O^{\mu}$. To check that $\check{\fR}^{\mu}$ is
into $\bar{U}^{\mu}$, note that $\mu\in\partial U^{\mu}\imply$$\mu_{(1)}=1-\delta$
$\imply\mu_{(n)}+\mu_{(n-1)}\le\frac{2\delta}{n-1}<\frac{2}{n+1}<1-\delta$,
where the first inequality follows from $\sum_{j=1}^{n}\mu_{j}-\mu_{(1)}=\delta$,
implying that the smallest two weights can sum to at most $\frac{2\delta}{n-1}$,
and the second inequality follows from $\delta\in(0,\frac{n-1}{n+1})$.
This implies that all of the {}``new companies'' created by $\check{\fR}$
are of relative size strictly smaller than $1-\delta$. So $[\check{\fR}(\mu)]_{(1)}\le1-\delta$
which implies that $\check{\fR}$ is into $\bar{U}^{\mu}$. If there
were $k$ companies of relative size $1-\delta$ for $\mu\in\partial U^{\mu}$,
then $\check{\fR}^{\mu}(\mu)$ has $k-1$ companies of relative size
$1-\delta$. Therefore, applying the $n$-fold composition $(\check{\fR}^{\mu}\circ\ldots\circ\check{\fR}^{\mu})$
to $\mu\in\bar{U}^{\mu}$ results in no companies of relative size
$1-\delta$. This implies that $\fR^{\mu}$ of Definition \ref{Def:SplitMergeRule}
is into $U^{\mu}$, making $(U^{\mu},\fR^{\mu})$ a regulation rule
and therefore a split-merge rule.

Consider the entropy function\begin{align*}
S:\R_{++}^{n} & \to\R,\\
S(x) & =-\sum_{i=1}^{n}x_{i}\log x_{i}.\end{align*}
We examine the change in entropy resulting from $\check{\fR}$. For
$\mu\in\partial U^{\mu}$ we have $\mu_{(1)}=1-\delta$, and so\begin{align*}
S(\check{\fR}^{\mu}(\mu))-S(\mu) & =-\left[2\frac{\mu_{(1)}}{2}\log(\frac{\mu_{(1)}}{2})+(\mu_{(n)}+\mu_{(n-1)})\log(\mu_{(n)}+\mu_{(n-1)})\right]\\
 & \;\;\;\;+\left[\mu_{(1)}\log\mu_{(1)}+\mu_{(n)}\log\mu_{(n)}+\mu_{(n-1)}\log\mu_{(n-1)}\right],\\
 & =(1-\delta)\log2-(\mu_{(n)}+\mu_{(n-1)})\log(\mu_{(n)}+\mu_{(n-1)})\\
 & \;\;\;\;+2\left(\frac{\mu_{(n)}\log\mu_{(n)}+\mu_{(n-1)}\log\mu_{(n-1)}}{2}\right).\end{align*}
Applying Jensen's inequality to the convex function $x\mapsto x\log x$,
we get \begin{align*}
S(\check{\fR}^{\mu}(\mu))-S(\mu) & \ge(1-\delta)\log2\\
&\quad + (\mu_{(n)}+\mu_{(n-1)})\left[-\log(\mu_{(n)}+\mu_{(n-1)})+\log(\frac{\mu_{(n)}+\mu_{(n-1)}}{2})\right],\\
 & =(1-\delta)\log2-(\mu_{(n)}+\mu_{(n-1)})\log2,\\
 & \ge\log2\left[1-\delta-\frac{2\delta}{n-1}\right]>0,\end{align*}
where the second to last inequality follows from the fact that $\sum_{j=1}^{n}\mu_{j}-\mu_{(1)}=\delta$,
so the smallest two weights can sum to at most $\frac{2\delta}{n-1}$.
The last inequality follows from the supposition that $\delta\in(0,\frac{n-1}{n+1})$.
From this, the change in entropy of $\fR$ can be seen to satisfy\begin{align*}
S(\fR^{\mu}(\mu))-S(\mu) & \ge\left[1-\delta\left(\frac{n+1}{n-1}\right)\right]\log2>0,\quad\forall\mu\in\partial U^{\mu}.\end{align*}
For $\varepsilon\in\R_{++}$, we may define the shifted entropy function
\begin{align*}
S^{(\varepsilon)} & :\Delta_{+}^{n}\to\R\\
S^{(\varepsilon)}(\mu) & :=S(\varepsilon\1_{n}+\mu)=-\sum_{i=1}^{n}(\mu_{i}+\varepsilon)\log(\mu_{i}+\varepsilon),\end{align*}
where $\1_{n}$ is the column vector of $n$ ones. For any $\kappa\in(0,\infty)$,
the entropy function $S$ restricted to domain $\{\mu:0<\mu_{i}\le\kappa,\mbox{ for }1\le i\le n\}$
is uniformly continuous, so therefore $\varepsilon\in(0,1)$ can be
chosen such that \begin{align}
\inf\left\{ S^{(\varepsilon)}(\fR^{\mu}(\mu))-S^{(\varepsilon)}(\mu)\mid\mu\in\partial U^{\mu}\right\}  & >0.\label{Proof:DeltaEntropy:InfSupShiftedEntropy}\end{align}
The shifted entropy function satisfies $S^{(\varepsilon)}\in C_{b}^{2}(\Delta_{+}^{n},\R)$,
so for an SDE (\ref{Eq:SDE}) with $b(\cdot)$ and $\sigma(\cdot)$
bounded on $U^{x}$, an application of Lemma \ref{Lem:EntropyViability}
with $G=S^{(\varepsilon)}$ proves the viability of $(U,\fR)$.
\qed\end{proof}

\begin{acknowledgements}
The authors would like to thank Ioannis
Karatzas for contributing to the beginning of this research and for
suggestions and references based on a draft, Tomoyuki Ichiba for giving
useful feedback on early drafts, and Pamela Shisler, J.D., for providing
information on U.S. antitrust law. The work of Fouque was partially
supported by National Science Foundation grant DMS-0806461.
\end{acknowledgements}

% BibTeX users please use one of
%\bibliographystyle{spbasic}      % basic style, author-year citations
%\bibliographystyle{spmpsci}      % mathematics and physical sciences
\bibliographystyle{spmpscinat} % THIS IS NOT AN OFFICIAL SPRINGER FILE, BUT KEN PAVLIC HACKED spmpsci.bst to this so that it woudl work with Natbib to generate Author-year citations!
%\bibliographystyle{spphys}       % APS-like style for physics
%\bibliographystyle{plain}
%\bibliography{/Users/strong/Desktop/Serious/Academic/Notes/References/LaTeX/MathFinBib}
%\bibliography{C:/Serious/Academic/Notes/References/LaTeX/MathFinBib}
%\bibliography{MathFinBib}

\end{document}